\let\oldnl\nl
\newcommand{\nonl}{\renewcommand{\nl}{\let\nl\oldnl}}
\newtheorem{theorem}{Theorem}
\newtheorem{observation}[theorem]{Observation}
\newtheorem*{claim*}{Claim}
\newtheorem{proposition}[theorem]{Proposition}
\newtheorem{corollary}[theorem]{Corollary}
\theoremstyle{definition}
\newtheorem{remark}[theorem]{Remark}
\newtheorem*{remark*}{Remark}
\newcommand{\markovstep}{{\+A}}
\newcommand{\approxsampler}{{\+A}}
\newcommand{\transitioncompalgo}{{\+B}}
\newcommand{\targetdist}{r}
\newcommand{\outputdist}{p}
\newcommand{\supp}{supp}
\newcommand{\Alg}{\mathrm{Alg}}
\newcommand{\targetcomp}{{\+B}}
\newcommand{\samplercomp}{{\+C}}
\newcommand{\targetcompall}{{\+B}}
\newcommand{\samplercompall}{{\+C}}
\newcommand{\Ber}{Ber}
\newcommand{\gensampler}{{\sc GenPerfectSampler}\xspace}
\newcommand{\gensamplerrs}{{\sc GenPerfectSamplerRS}\xspace}
\newcommand{\poly}{\text{poly}}
\def\oPr{\mathbf{Pr}}
\renewcommand{\Pr}[2][]{ \ifthenelse{\isempty{#1}}
  {\oPr\left[#2\right]}
  {\oPr_{#1}\left[#2\right]} }
\def\oE{\mathbb{E}}
\newcommand{\E}[2][]{ \ifthenelse{\isempty{#1}}
  {\oE\left[#2\right]}
  {\oE_{#1}\left[#2\right]} }
\def\oVar{\mathbf{Var}}
\newcommand{\Var}[2][]{ \ifthenelse{\isempty{#1}}
  {\oVar\left[#2\right]}
  {\oVar_{#1}\left[#2\right]} }
\def\oEnt{\mathbf{Ent}}
\newcommand{\Ent}[2][]{ \ifthenelse{\isempty{#1}}
  {\oEnt\left[#2\right]}
  {\oEnt_{#1}\left[#2\right]} }
  \def\*#1{\boldsymbol{#1}} 
  \def\+#1{\mathcal{#1}} 
  \def\-#1{\mathrm{#1}} 
  \def\^#1{\mathbb{#1}} 
  \def\!#1{\mathfrak{#1}}
\newcommand{\D}[2]{\-D_{(#1)}(#2)}
\renewcommand{\epsilon}{\varepsilon}
\newcommand{\norm}[1]{\left\Vert#1\right\Vert}
\newcommand{\set}[1]{\left\{#1\right\}}
\newcommand{\tuple}[1]{\left(#1\right)} \newcommand{\eps}{\varepsilon}
\newcommand{\abs}[1]{\left\vert#1\right\vert}
\newcommand{\Lovasz}{Lov\'asz\xspace}
\title{Perfect sampling from rapidly mixing Markov chains}
\author{
Andreas G\"{o}bel \thanks{ Hasso Plattner Institute, University of Potsdam, Germany. \textnormal{E-mail:~\url{andreas.goebel@hpi.de}}}
\and
Jingcheng Liu\thanks{State Key Laboratory for Novel Software Technology, Nanjing University, China. \textnormal{E-mail:~\url{liu@nju.edu.cn}}}
\and 
Pasin Manurangsi\thanks{Google Research, Thailand. \textnormal{E-mail:~\url{pasin@google.com}}}
\and
Marcus Pappik \thanks{ Hasso Plattner Institute, University of Potsdam, Germany. \textnormal{E-mail:~\url{marcus.pappik@hpi.de}}}
}
\date{}
\begin{document}

\maketitle
\begin{abstract}
We show that efficient approximate sampling algorithms, combined with a slow exponential time oracle for computing its output distribution, can be combined into constructing efficient perfect samplers, which sample exactly from a target distribution with zero error upon termination. This extends a classical reduction of Jerrum, Valiant and Vazirani, which says that for self-reducible problems, deterministic approximate counting can be used to construct perfect samplers. We provide two surprisingly simple constructions,  and our perfect samplers run in polynomial time both in expectation and with high probability. 

An overwhelming amount of efficient approximate sampling algorithms are based on Markov chains. Informally, we show that any Markov chains with absolute spectral gap $\gamma$ can be converted into a perfect sampler with expected time $O\tuple{\frac{1}{\gamma}\ln\frac{\abs{\Omega}}{\pi_{*}}}$, where $\pi_{*}$ is the minimum probability in the stationary distribution. 
This is also the best possible bound for mixing time to achieve approximate sampling from a spectral gap, and we are able to do perfect sampling in the same time bound in expectation.

We also highlight a number of applications where we either get the first perfect sampler up to the uniqueness regime (roughly speaking, everywhere except where NP-hardness results are known), or the fastest perfect sampler known to date. Remarkably, we are able to get the first perfect sampler for perfect matchings of bipartite graphs based on the celebrated Jerrum-Sinclair-Vigoda algorithm.
\end{abstract}

\begin{center}
\begin{minipage}{0.95\textwidth} 
  \textbf{Funding:} Andreas Göbel was funded by the project PAGES (project No. 467516565) of the German Research Foundation (DFG). 
  Jingcheng Liu is supported by the National Science Foundation of China under Grant No. 62472212.
  Marcus Pappik was funded by the HPI Research School on Data Science and Engineering.
\end{minipage}
\end{center}

\section{Introduction}
Perfect sampling is the algorithmic task of sampling exactly from a given distribution. The distributions are often high dimensional: one may think of sampling a spanning tree, or sampling an independent set of a graph uniformly at random. The approximate sampling task refers to sampling approximately uniformly at random, with approximation in terms of total variation distance.

A closely related task to sampling is counting: given a problem in the complexity class NP, how many witnesses (solutions) are there? These problems belong to the class \#P. As an example, given a graph, one wants to count how many independent sets there are. The approximate counting task refers to approximating this number multiplicatively within $1 \pm \eps$.
For many natural classes of problems (formally, \emph{downward self-reducible} problems in NP), the task of approximate sampling and approximate counting are equivalent~\cite{jerrum1986random}. 

The separations between approximate counting and exact counting under standard complexity assumptions have been extensively explored and are fairly well-understood. Many combinatorial counting problems are known to be \#P-hard, yet admit efficient approximate counting algorithms (e.g., counting the number of matchings in a graph, or solutions to DNF formulas). Jerrum, Valiant and Vazirani~\cite{jerrum1986random} initiated a complexity-theoretic study on perfect sampling, in the form of uniform generation of NP witnesses. They showed that: given an NP oracle, there is a BPP algorithm for approximately counting witnesses to any problem in NP (and thus for approximate sampling as well, see also~\cite{sipser1983complexity,stockmeyer1983complexity} and the bisection technique of~\cite{valiant1985np}); given a $\Sigma_2^{\mathrm{p}}$ oracle, there is a ZPP algorithm for perfect sampling. 
The requirement of a $\Sigma_2^{\mathrm{p}}$ oracle was relaxed to an NP oracle as well for perfect sampling by Bellare, Goldreich and Petrank~\cite{bellare2000uniform}, where they also discussed applications of perfect sampling to zero knowledge proof systems.  As such, approximate counting, approximate sampling and perfect sampling are no harder than NP under Turing reductions for \emph{every} problem in \#P.  In summary, there is a complexity separation between exact and approximate counting not only on certain individual problems, but also as a class of problems in terms of their worst case complexity (unless the polynomial hierarchy collapses, as approximate sampling/counting is no harder than $\mathrm{NP}$ under Turing reductions, while exact counting as a class is \#P).  

\begin{align*}
    \mathbf{ApproxCount}\approx\mathbf{ApproxSample} \overset{?}{\subset} \mathbf{PerfectSample} \overset{?}{\subset} \mathrm{NP} \subset \mathbf{ExactCount} = \#P
\end{align*}

The separation between exact and approximate counting leads to the following question: 
\begin{quote}
Is there a complexity separation between the complexity of approximate and perfect sampling? Can there be a distribution with efficient approximate sampler, yet it is (NP-)hard to do perfect sampling?
\end{quote}
To this date, no complexity separations between approximate and perfect sampling is known, whether in any specific problems or in terms of worst-case complexity. In this work we show that, under mild conditions, any approximate sampler could be converted to a perfect sampler almost for free. In particular, this holds true for any Markov chains that can be simulated computationally.

The existing literature on perfect sampling is vast and mainly focused on algorithms for concrete problems. One of the earliest uniform perfect samplers for a \#P-hard problem is given by Jerrum, Valiant and Vazirani~\cite{jerrum1986random} for solutions to a DNF formula, which is \#P-hard to count exactly. They also showed that, for downward self-reducible problems, any efficient deterministic approximate counting algorithm (fully polynomial-time approximation scheme, or FPTAS) can be converted into a perfect sampler using a simple rejection sampler.  Many FPTASs have been obtained based on the decay of correlations~\cite{weitz2006counting,bayati2007simple,sinclair2012approximation,li2012approximate}, the polynomial interpolation method~\cite{barvinok2016combinatorics,patel2017deterministic,liu2019ising}, and the cluster expansion for models in the \emph{low-temperature} regime~\cite{helmuth2020algorithmic}. However, a drawback of such constructions is that deterministic approximate counting algorithms often suffer from a huge polynomial runtime. In fact,  there are numerous instances where the degree of the polynomial is not an absolute constant and require certain instance parameters to be constants (e.g., the maximum degree of a graph).

Many early approximate sampling algorithms are based on Markov chains, which led Aldous and Diaconis~\cite{aldous1986shuffling,aldous1987strong} to introduce the concept of strong stationary times for Markov chains. Roughly speaking, this is a stopping time when the current sample of the Markov chain is distributed exactly as the stationary distribution, and the sample is independent of the stopping time itself. Many constructions of strong stationary times have been studied and a general theory of strong stationary dual construction was also developed by Diaconis and Fill~\cite{diaconis1990strong}. A notable example is the evolving sets introduced by Morris and Peres~\cite{morris2005evolving}, who were able to prove a bound on uniform mixing time of Markov chains based on spectral profile. 

A different line of work begins with the beautiful idea of coupling from the past (CFTP) introduced by Propp and Wilson~\cite{propp1996exact}. Remarkably, where applicable, CFTP can determine on its own how long a Markov chain should run in order to sample exactly from the stationary distribution.
The fact that it does not require the knowledge of any bound on mixing time makes it particularly valuable, as mixing times in general are notoriously hard to estimate (see, e.g., the SZK-hardness~\cite{bhatnagar2011computational}), even if the chain is promised to mix in polynomial time and the starting state is given. Subsequently, there have been numerous ideas based on CFTP for perfect sampling~\cite{fill1997interruptible,kendall2000perfect,wilson2000couple,fill2000randomness}.
However, designing CFTP for Markov chains that are non-monotone can be technically challenging and successful examples are scarce (see also~\cite{haggstrom1999exact,huber1998exact}). 

More recently, there are also active and fruitful endeavors in the context of  “sampling \Lovasz{} local lemma”: how and when can one sample a satisfying assignment uniformly at random in a regime similar to the local lemma. Perfect samplers in this context are primarily based on two approaches: an LP based approach introduced by Moitra~\cite{moitra2019approximate} which yields a deterministic approximate counting algorithm, and the partial rejection sampling scheme by Guo, Jerrum and Liu~\cite{guo2019uniform} that can be seen as a natural extension of the Moser-Tardos algorithm for uniform sampling. Remarkably, a recent breakthrough was obtained by~\cite{Wang2023ASL} for large domain size. 

The distinction between perfect sampling and approximate sampling also plays an important role in differential privacy (DP)~\cite{dwork2006calibrating}. The celebrated exponential mechanism~\cite{mcsherry2007mechanism} provides a generic distribution that, when sampled from perfectly, guarantees \emph{pure} DP; whereas approximately sampling from the exponential mechanism only yields \emph{approximate} DP, which is much less satisfactory from a privacy perspective. 
In this context however, it is possible to perturb the distribution itself to achieve pure DP without actually solving the perfect sampling task. See, for instance, Section 6 of~\cite{bassily2014private} and~\cite{lintractable} for a more recent example.
We also note that, while there are separations on the sample complexity to achieve pure versus approximate DP, they do not imply a computational complexity separation.  

The plethora of perfect sampling results and techniques for specific problems stand in stark contrast to the lack of complexity-theoretic study of perfect sampling and its relation to approximate sampling. While we are not aware of any explicit conjecture, some authors suggested that perfect sampling may not be computationally harder than approximate sampling. Quoting Feng, Guo and Yin~\cite{feng2019perfect}: ``Morally, we believe that efficient perfect sampling algorithms exist whenever efficient approximate samplers exist.'' In this work, we make a huge step towards a rigorous version of that statement, by showing that, under mild assumptions, approximate sampling algorithms can indeed be turned into perfect sampling algorithms at almost no additional running time costs.

\subsection{Our contributions}
Informally, we show that any \emph{fully polynomial almost uniform sampler}\footnote{Note that the word ``uniform'' is there mostly for historical reasons and the term FPAUS is also used for approximate sampling algorithms from non-uniform distributions.} (FPAUS) $\+A$ 
together with an exponential time oracle $\+B$ that computes the output distribution of $\+A$ and an exponential time oracle $\+C$ that computes the target distribution $\pi$, can be combined into a perfect sampler for $\pi$ that runs in expected polynomial time.

To be precise, we recall the definition of an FPAUS~$\+A$ for a distribution $\pi$ from~\cite{jerrum1986random}. On input of size $n$, $\+A$ outputs a sample within total variation distance $\eps$ to $\pi$ in time polynomial in $n$ and $\ln 1/\eps$.\footnote{As a side note, the requirement on running time of an FPAUS to be polynomial in $\ln 1/\eps$ is standard in the literature (see, e.g.,~\cite{jerrum1986random}), and should not be confused with the requirement of an FPRAS that the running time is required to be polynomial in $1/\eps$ and $\ln 1/\delta$ for multiplicative approximation tolerance $\eps$, and failure probability $\delta$.
In fact,  FPRAS and FPAUS, as defined, are equivalent for self-reducible problems~\cite{jerrum1986random}. Intuitively, the failure probability in FPRAS plays the role of total variation distance in FPAUS.} Furthermore, we require that $\+B$ computes the output distribution of $\+A$ for any given error bound $\varepsilon$ in time exponential in $n$  and sub-linear in $1/\varepsilon$.\footnote{In fact, even $\exp(\poly(n))$ and sub-linear in $1/\varepsilon$ would be sufficient.}
Finally, we require that $\+C$ computes $\pi$ in time exponential in $n$. Under these requirements we show that we can combine the oracles $\+B$ and $\+C$ together with the FPAUS $\+A$ for a sufficiently small choice of $\varepsilon$ to obtain a perfect sampler with expected polynomial running time.


In particular, we show that any Markov chain with a uniform mixing time $T$, can be converted into a perfect sampler in time $O(T \ln \abs{\Omega})$ where $\Omega$ is the state space of the Markov chain. For general Markov chains, uniform mixing time could be obtained by a conductance-based approach, or an $\ell_1$ mixing bound but with a loss of $\ln \pi_*^{-1}$ for the smallest probability in the stationary distribution $\pi$. For reversible Markov chains, uniform mixing time is captured by the spectral gap of the Markov chain, also known as Poincar\'e’s inequality. As a result, we are able to obtain significantly faster perfect samplers, or the first sampler that works all the way up to where approximate counting is possible (uniqueness threshold), all based on a black-box lifting of existing mixing time results on Markov chains.

For ease of exposition, we make two assumptions on the Markov chains in consideration: every step of the Markov chain can be simulated in constant time; 
and the entire transition matrix can be computed in time $O(\abs{\Omega}^4)$, where $\Omega$ is the state space. We note that these assumptions are without loss of generality: for the first one, our expected time bound can be interpreted as how many oracle calls are needed for the Markov chain;  and the second one is arbitrary, by adjusting parameters appropriately one can get identical result for any polynomial time in $\abs{\Omega}$.

\begin{theorem}[Main theorem]
\label{thm:main}
    Given a Markov chain $P$ on the state space $\Omega$ with a stationary distribution $\pi$. 
    If $P$ mixes in time $T$ to within $1/4$ in total variation distance of $\pi$, then there is a perfect sampler for $\pi$ with expected running time $O\tuple{T \ln \frac{\abs{\Omega}}{\pi_*}}$, where $\pi_* = \min_{x\in \supp(\pi)} \pi(x)$. 

    Furthermore, if $P$ is reversible and has absolute spectral gap $\gamma_*$, then there is a perfect sampler for $\pi$ with expected running time $O\tuple{\frac{1}{\gamma_*} \ln \frac{\abs{\Omega}}{\pi_*}}$.
\end{theorem}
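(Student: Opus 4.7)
The plan is to instantiate the general perfect-sampling reduction established earlier in the paper, using the Markov chain $P$ itself (simulated for sufficiently many steps from a fixed starting state $x_0$) as the underlying approximate sampler $\+A$. By the standard submultiplicativity of the TV mixing profile, $\bar d(kT) \le \bar d(T)^k \le 2^{-k}$, so running $P$ for $\ctp{\log_2(1/\varepsilon)} \cdot T$ steps produces a sample whose law is within total variation $\varepsilon$ of $\pi$; hence $\+A$ is an FPAUS with accuracy $\varepsilon$ and cost $O\tuple{T \log(1/\varepsilon)}$ in Markov chain steps. The two auxiliary oracles demanded by the reduction are straightforward under the stated assumptions: $\statcompalgo$, which computes the output distribution of $\+A$, is implemented by starting from the point mass $\delta_{x_0}$ and applying the transition matrix $\ctp{\log_2(1/\varepsilon)} \cdot T$ times; $\samplercomp$, which computes the target $\pi$, is obtained by solving the linear system $\pi = \pi P$ on the simplex. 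Both run in time polynomial in $\abs{\Omega}$ once the transition matrix is written down in time $O(\abs{\Omega}^4)$, hence exponential in the input size as required.

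Next, I plug these three components into the general reduction with accuracy parameter set polynomially small in $\pi_*/\abs{\Omega}$, concretely $\varepsilon = c\, \pi_*/\abs{\Omega}$ for a small constant $c$ dictated by the reduction. The reduction then guarantees that the resulting perfect sampler calls $\+A$ only $O(1)$ times in expectation, and each such call costs $O\tuple{T \log(\abs{\Omega}/\pi_*)}$ Markov chain steps by the amplification above. Aggregating yields the first claim of the theorem.

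For the reversible case I replace the TV-based amplification by the sharper Poincar\'e bound available for reversible chains with absolute spectral gap $\gamma_*$, namely
\begin{equation*}
    \max_{x \in \supp(\pi)} \DTV{P^t(x,\cdot)}{\pi} \le \tfrac{1}{2}\, \pi_*^{-1/2}\, (1 - \gamma_*)^t,
\end{equation*}
so that $t = \gamma_*^{-1} \log\tuple{1/(\varepsilon \sqrt{\pi_*})}$ steps already drive the TV distance below $\varepsilon$. Rerunning the same recipe with the same choice $\varepsilon = \Theta(\pi_*/\abs{\Omega})$ gives an FPAUS of cost $O\tuple{\gamma_*^{-1} \log(\abs{\Omega}/\pi_*)}$ per call, and hence the claimed expected running time $O\tuple{\gamma_*^{-1} \log(\abs{\Omega}/\pi_*)}$ for the perfect sampler.

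The main obstacle is the error-budget bookkeeping when invoking the reduction. I must check that the reduction's requirement on $\varepsilon$ is only polynomial in $\pi_*$ and $1/\abs{\Omega}$, so that $\log(1/\varepsilon) = O(\log(\abs{\Omega}/\pi_*))$ and the FPAUS amplification does not inflate the target bound. If the reduction instead required $\varepsilon$ exponentially small, an extra $\pi_*^{-1}$-style factor would appear and we would no longer obtain the clean spectral-gap bound. Provided that dependence is polynomial (which the earlier general theorem should supply directly), the two displays above give exactly the stated expected running times.
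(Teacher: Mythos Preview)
Your plan is essentially the paper's own route, but the paper organizes it through the \emph{uniform} ($\ell_\infty$) mixing time rather than TV. Concretely, the paper first proves a corollary of the generic reduction tailored to Markov chains (Corollary~\ref{cor:mc-sampler}): running the chain $\tau_U(\eps)$ steps yields an $(\eps,\infty)$-approximate sampler directly, and with $\eps = 1/\abs{\Omega}^4$ the $O(\abs{\Omega}^4)$ cost of the oracles is absorbed. Then it bounds $\tau_U(\eps)$ two ways: for the TV part, submultiplicativity plus the pointwise bound $\D{\infty}{t}\le \D{1}{t}/\pi_*$ gives $\tau_U(\eps)\le T\ln\tfrac{1}{\eps\pi_*}$; for the reversible part, the $\ell_2$--$\ell_\infty$ identity $\D{\infty}{2t}=(\D{2}{t})^2$ gives $\tau_U(\eps)\le \tfrac{2}{\gamma_*}\ln\tfrac{1}{\eps\pi_*}$.

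This is exactly the content of the ``main obstacle'' you flagged but did not pin down. Two points to tighten. First, the reductions (\Cref{thm:gen-red,thm:gen-red-rs}) do \emph{not} accept a TV-$\varepsilon$ sampler as input; they need pointwise $D_{\max}$ or $D_{(\infty)}$ control. Your choice $\varepsilon=c\,\pi_*/\abs{\Omega}$ implicitly does this conversion (TV $\le \varepsilon$ implies $D_{(\infty)}\le \varepsilon/\pi_*$), but you should say so rather than leave it to ``dictated by the reduction.'' Second, your particular $\varepsilon$ gives only pointwise accuracy $O(1/\abs{\Omega})$, so the term $\eps'\cdot T(\+B)$ in \Cref{thm:gen-red} is $O(\abs{\Omega}^3)$ and is \emph{not} dominated by $O(T\ln(\abs{\Omega}/\pi_*))$. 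The fix is the paper's: take the pointwise parameter to be $1/\abs{\Omega}^4$ (equivalently, TV accuracy $\pi_*/\abs{\Omega}^4$); since $\pi_*\le 1/\abs{\Omega}$, this only changes $\log(1/\varepsilon)$ by a constant factor and the stated bounds go through.
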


We remark that our running time bound is fairly tight assuming one only gets oracle access to the Markov chains and the stationary distribution. 

In general, there appears to be a gap in how many steps are needed in Markov chains for approximate sampling and perfect sampling: consider a $1/2$-lazy random walk on a complete graph: it has a constant spectral gap and it only takes constant time to get within $1/4$ away from the total variation distance. However, reducing the probability of the starting vertex to $O(1/n)$ takes $\Omega(\ln n)$ steps. Assuming only a black box access to the Markov chain, one needs a factor $\alpha$ approximation in order to apply a rejection sampling with success probability $1/\alpha$. So the extra $O(\ln \abs{\Omega})$ factor in running time is needed in the worst case and we show that this is also sufficient.

One may argue that natural Markov chains are not complete graphs but bounded degree expanders, hence there may be more clever tricks for reducing the probability of the starting vertex-state. However, the above example can also be modified: consider two expanders on $n$ and $2^n$ vertices joined by an edge and, say, we start the random walk from the $n$-vertex expander. Then, it takes $O(n)$ steps to escape to the larger expander, but afterwards it takes only $O(n)$ steps to mix in total variation distance. However, if one wants the staying probability within the $n$-vertex expander to be exponentially small, it will take $\Omega(n^2)$ steps. In this case we also have a gap of $O(\ln \abs{\Omega})$. The essence of this gap will become clear when we discuss uniform mixing times.

\subsubsection*{Applications}
We apply our reduction to a number of existing Markov chains and beyond. Most notably,  the celebrated algorithm of Jerrum, Sinclair and Vigoda~\cite{jerrum2004polynomial} for sampling perfect matchings in a bipartite graph is not itself a Markov chain, but is based on a sequence of Markov chains, each constructed with the help of the previous Markov chains in the sequence. We show that there is an exponential time oracle for computing the output distribution of the JSV algorithm, which allows us to get a perfect sampler by invoking the JSV algorithm as black box.
\begin{theorem}
    There is a perfect sampler for perfect matchings in a bipartite graph, with expected running time in $O\tuple{n^{15} (\ln n)^4}$.
\end{theorem}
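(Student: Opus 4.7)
The plan is to apply the general FPAUS-to-perfect-sampler reduction underlying \Cref{thm:main} to the JSV algorithm~\cite{jerrum2004polynomial}, treating JSV as a black-box approximate sampler for the uniform distribution $\pi$ on perfect matchings of the bipartite graph $G$. Because JSV is not a single Markov chain but a sequence of chains tied together by activity-tuning stages, the Markov-chain form of \Cref{thm:main} does not apply directly, but the general FPAUS form promised in the abstract does.

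To invoke that reduction I need three ingredients: (a) the FPAUS $\+A_{\mathrm{JSV}}$; (b) an oracle $\+B$ that computes the output distribution of $\+A_{\mathrm{JSV}}$; and (c) an oracle $\+C$ that computes the target $\pi$. For (a), JSV already gives an FPAUS with running time polynomial in $n$ and $\ln(1/\eps)$; the precise exponents from their mixing-time analysis determine the leading $n$-exponent in the final bound. For (c), one enumerates perfect matchings of $G$ in time $\exp(\poly(n))$ to obtain $\abs{\Omega}$, after which $\pi(M) = 1/\abs{\Omega}$ for every perfect matching $M$.

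The main obstacle is $\+B$. I would construct it by unfolding JSV stage by stage. Let $S_i$ denote the joint distribution (over JSV's internal randomness) of the pair (current sample state, current activity vector) at the end of stage $i$. From $S_i$ one obtains $S_{i+1}$ by applying the $T_{i+1}$-step transition kernel of the corresponding Markov chain---which acts on matchings of $G$ with at most two holes---followed by the deterministic activity-update map. Since this state space has size at most $2^{O(n\ln n)}$, its $T$-step kernel can be computed by matrix exponentiation in time $\exp(\poly(n))\cdot\ln T$; moreover the activity vectors live in a bounded-precision grid of size $\exp(\poly(n))$, so the tables stay within $\exp(\poly(n))$ throughout. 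After the final ``condition on being a perfect matching'' step, $\+B$ runs in time $\exp(\poly(n))\cdot\mathrm{polylog}(1/\eps)$, which is exponential in $n$ and sub-linear in $1/\eps$, as required.

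With $\+A_{\mathrm{JSV}}, \+B, \+C$ in place, the reduction produces a perfect sampler with expected running time $T_{\mathrm{JSV}}(n,\eps^*)\cdot O\tuple{\ln\tuple{\abs{\Omega}/\pi_*}}$ for a suitable $\eps^* = \Theta(1/\poly(\abs{\Omega}))$. Since $\abs{\Omega}\le n!$ and $\pi$ is uniform, we get $\ln(\abs{\Omega}/\pi_*) = O(n\ln n)$ and $\ln(1/\eps^*) = O(n\ln n)$. Plugging in the JSV mixing-time bounds from~\cite{jerrum2004polynomial} then yields the claimed $O(n^{15}(\ln n)^4)$ expected running time. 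The main technical subtlety is controlling the joint-law table's blowup through JSV's activity-tuning stages---verifying that the effective support of each activity vector stays of size $\exp(\poly(n))$ is where I expect the proof to require the most care.
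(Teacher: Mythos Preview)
Your high-level plan matches the paper's: treat JSV as an FPAUS, build an exponential-time oracle for its output distribution by inductively tracking the law of the activity (transition-weight) vector across phases, and plug into the generic reduction of \Cref{thm:gen-red}. Two concrete gaps remain, and the second one is not just a missing detail but an actual error.

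\textbf{The support bound on activity vectors.} Your ``bounded-precision grid'' is not the right mechanism. In JSV, the update from $w_i(u,v)$ to $w_{i+1}(u,v)$ multiplies by a ratio $S_P/S_{u,v}$ of two integer sample counts, each in $\{0,\dots,S\}$; thus each coordinate acquires at most $(S+1)^2$ possible new factors per phase. Over $R$ phases and $n^2$ hole patterns this yields at most $(S+1)^{2Rn^2}$ possible weight vectors. With $R=O(n^2\ln n)$ and $S=O(n^2\ln(n/\delta))$ this gives a table of size $\exp(\Theta(n^4\ln n\cdot\ln S))$, which is $\exp(\Theta(n^4(\ln n)^2))$ once $\delta$ is plugged in. This combinatorial count is the crux you flagged as ``requiring the most care''; without it one cannot size the table, and a precision/rounding argument does not substitute cleanly because the JSV analysis uses the exact sample-count ratios.

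\textbf{The running-time accounting.} The form $T_{\mathrm{JSV}}(n,\eps^*)\cdot O(\ln(\abs{\Omega}/\pi_*))$ is the Markov-chain bound from \Cref{thm:main}, which does not apply here; the relevant bound is \Cref{thm:gen-red}, namely $O\bigl(T(\approxsampler_\eps)+\eps\cdot T(\samplercompall_{\approxsampler_\eps})+\eps\cdot T(\targetcompall_{\targetdist})+\eps\abs{\Omega}\bigr)$. Because $T(\samplercompall_{\approxsampler_\eps})=\exp(\Theta(n^4(\ln n)^2))$ by the table-size computation above, taking $\eps^*=1/\poly(\abs{\Omega})=\exp(-O(n\ln n))$ leaves $\eps\cdot T(\samplercompall_{\approxsampler_\eps})$ super-polynomial. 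The paper instead takes $\eps=\exp(-\Theta(n^4(\ln n)^2))$; then $\ln(1/\eps)=\Theta(n^4(\ln n)^2)$, and the JSV FPAUS (after the $\ell_1\to\ell_\infty$ conversion of \Cref{prop:linfty-l1}, i.e.\ setting $\delta=\eps\pi_*$) costs $O(n^{11}(\ln n)^2\cdot(n\ln n+\ln(1/\eps)))=O(n^{15}(\ln n)^4)$. This is precisely where the exponent $15$ comes from; with your parameters you would get at best $n^{13}$, and the claimed bound would not follow. (A minor related point: the dependence of $T(\samplercompall_{\approxsampler_\eps})$ on $\eps$ is $(\ln(1/\eps))^{\Theta(n^4\ln n)}$, not $\exp(\poly(n))\cdot\mathrm{polylog}(1/\eps)$ with an $O(1)$ exponent; this still suffices for sub-linearity in $1/\eps$, but it affects how small $\eps$ must be.)
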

We emphasize that this is the first perfect sampling algorithm for uniform random perfect matchings on general bipartite graphs with expected polynomial running time.
In particular, no deterministic approximate counting algorithm for the number of perfect matchings is known in this setting, meaning that the reduction from perfect sampling to deterministic approximate counting given by Jerrum, Valiant and Vazirani~\cite{jerrum1986random} is not applicable. 
To the best of our knowledge, perfect sampling algorithms for bipartite perfect matchings only exist in the setting of dense bipartite graphs \cite{huber2006exact} and our result is the first to avoid this restriction.

Besides sampling perfect matchings in bipartite graphs, we showcase our reduction on various other applications.
While we defer a detailed discussion of each application to \Cref{subsec:simple_applications}, we briefly summarize our results here:
\begin{itemize}
     \item We show that, on graphs of maximum degree $\Delta$, there exists perfect sampling algorithm for uniform random \textbf{$q$-colorings} for every $q \ge \left(\frac{11}{6} - \varepsilon_0\right) \Delta$ for some $\varepsilon_0 \approx 10^{-5}$  with expected running time in $O(n^2)$, essentially matching the $O(n^2)$ mixing time known to hold in the same regime for approximate sampling via Glauber dynamics~\cite{chen2019improved}.
    The best known bound on the minimum number of colors required for perfect sampling is obtained via reduction to deterministic approximate counting  \cite{jerrum1986random}. Until recently, the best deterministic approximation was applicable for $q \ge (2 - \varepsilon_1) \Delta$ colors for some small $\varepsilon_1 > 0$ and runs in $n^{\exp(\poly(q))}$~\cite{liu2022correlation,bencs2024deterministic}. 
    Parallel to our work, Chen, Feng, Guo, Zhang, and Zou~\cite{chen2024deter} showed a deterministic approximation algorithm for $q \ge \left(\frac{11}{6} - \varepsilon_0\right) \Delta$ colors with a running time of $n^{f(q,\Delta)}$, where $f\in \Delta^{O(\log\log\Delta)}\log q$.
    While the latter result gives the first perfect sampling algorithm that matches the number of colors required by approximate samplers, the running time is still substantially worse.
    On the other hand, a long line of research was devoted to obtaining perfect sampling algorithms with running times of the form $n^c$ for some $c$ independent of $\Delta$ and $q$  \cite{huber1998exact,feng2022perfect,bhandari2020improved,jain2021perfectly}, where the best known result is a sampler for $q \ge (8/3+o(1))\Delta$ colors with linear running time dependency on $n$ \cite{jain2021perfectly}.
    Our result is the first perfect sampler that fulfills both criteria simultaneously: the best known bound on $q$ and the fastest expected running time. 
    Moreover, our reduction immediately gives improved bounds for more restrictive graph classes, such as providing a perfect sampling algorithm for $q \ge \Delta + 3$ colors on graphs of sufficiently large girth.
    \item We show that for every \textbf{anitferromagnetic $2$-spin model} in the up-to-$\Delta$ uniqueness regime, there exists a perfect sampling algorithm with expected running time in $O(n^2)$ on graphs of size $n$ and maximum degree $\Delta$. Previous perfect sampling algorithms were only applicable to more restrictive parameter regimes~\cite{fill2000randomness,guo2019uniform}, required sub-exponential growth of the graph~\cite{anand2022perfect,feng2022perfect} or had a significantly worse running time of $n^{O(\log\Delta)}$ due to relying on deterministic approximation algorithms (see e.g.,~\cite{weitz2006counting,sinclair2012approximation,li2012approximate,li2013correlation} or~\cite{barvinok2016combinatorics,patel2017deterministic,liu2022correlation}).
    \item We obtain a perfect sampling algorithm for the \textbf{weighted even subgraphs} and for the \textbf{ferromagnetic Ising model} at inverse temperature $\beta$ and external field $\lambda$ on bounded-degree graphs with expected running time $O_{\beta, \lambda}(n(m+n))$ for $\lambda \neq 1$ and with expected running time $O(n^4 m^4 \ln \beta)$ for $\lambda = 1$ (here, $m$ denotes the number of edges of the graph). Hence our black-box reduction in the case of non-zero external field gives a similar running time as the recent monotone CFTP (Coupling From The Past) approach tailored for the random-cluster model by Feng, Guo and Wang~\cite{feng2022sampling}.
    \item We prove that there is a perfect sampling algorithm for \textbf{linear extensions of posets} with expected running time in $O(n^4 (\ln n)^2)$.
    \item We provide a prefect sampling algorithm for \textbf{homogeneous and strongly log-concave distributions} on $\binom{[n]}{k}$ with expected running time in $O(k^2 \ln n)$.
    This result includes a perfect sampling algorithm for the uniform distribution over the bases of a rank $k$ matroid on $n$ elements.
    In the special case of uniform sampling of spanning trees our black-box reduction yields an expected running time of $O(n^2 \ln n)$, which, while being sub-optimal, is still faster than various well-known nontrivial algorithms~\cite{aldous1990random,broder1989generating,wilson1996generating,PW98}.   
\end{itemize}

\subsection{Research directions}

Our reductions follow the recent literature which defines efficiency for perfect sampling algorithms based on expected running time \cite{fill2000randomness,kendall2000perfect,wilson2000couple,huber2006exact,guo2019uniform,jain2021perfectly,feng2022perfect,anand2022perfect}. Some earlier papers, such as \cite{jerrum1986random}, suggest stronger notions of efficiency. An interesting open question is if we can turn an FPAUS into an efficient perfect sampling algorithm under this stronger running time requirement.\footnote{Note that, at first glance, it is not even clear if the existence of a perfect sampling algorithm with polynomial expected running time is a stronger property than the existence of an FPAUS. However, for downward self-reducible problems, this is indeed the case. By Markov's inequality, we can use the perfect sampling algorithm to obtain an $\varepsilon$-approximate sampler with running time $\poly(n, 1/\varepsilon)$. This suffices to obtain an FPRAS for the associated counting problem, which in turn gives an FPAUS via the reduction in \cite{jerrum1986random}.
In fact, every $\varepsilon$-approximate sampler for a self-reducible problem with running time $\poly(n, 1/\varepsilon)$ can be boosted to a running time of $\poly(n)$ and $\log(1/\varepsilon)$.}

In this article our results are focused on distributions of discrete state spaces.
However, Markov chains have been used for approximate sampling from continuous distributions. We believe that, under a suitable model of computation, our results could be extended to continuous domains. Characteristic applications to aim for include sampling from convex bodies in $\mathbb{R}^d$ \cite{DyerFK91,lovasz1993random} or exact simulation of Gibbs point processes \cite{guo2021perfect,anand2023perfect}.
While the theory of $\ell_2$-mixing that we use to obtain more efficient algorithms from reservable Markov chains does extend beyond discrete domains, new ideas are needed to cope with the size of the state space (e.g., through a suitable warm start) and to design the required exponential time oracles (or avoid relying on them).

Finally, we would like to point out that there are  desirable properties of perfect samplers obtained through other techniques, that may not come for free (if possible at all) only assuming a black-box access to an approximate sampler. For example, the perfect sampling algorithm for spin systems on graphs with sub-exponential growth by Anand and Jerrum \cite{anand2022perfect} allows for sampling from finite-volume projections of infinite-volume Gibbs measures under strong spatial mixing. 
It would be interesting to see if their technique could be adapted to turn an approximate sampler for finite projections into a perfect sampler. In particular, it is unclear what is the least assumption under which one can find an ``exponential time oracle'' to approximately compute the marginals of an infinite spin system with a running time dependency on $\eps$ that scales like $o(1/\eps)$, and how to obtain an exact oracle for finite projections of infinite-volume Gibbs measures. 

\subsection{Organization} We present two surprisingly simple constructions of perfect samplers. In Section 2 we introduce two notions of approximations that requires different reductions. In Section 3, we give the two reductions for the two different notions of approximation, by showing how FPAUSs can achieve both notions with slightly different choices of parameters. Our first reduction is based on a simple Metropolis filter, and our second reduction is based on rejection sampling. In Section 4, we focus on Markov chains. In this context, both approximation notions needed for the reduction is captured by the uniform mixing of Markov chains. We first review basic properties in mixing times of Markov chains, then show how standard $\ell_1$ and $\ell_2$ mixing can be used to deduce uniform mixing (also known as $\ell_\infty$ mixing).
In Section 5, we apply our perfect sampling reductions in the context of Markov chains.
In Section 6, we show a perfect sampler for perfect matchings of bipartite graphs.


\section{Preliminaries}

Before we go into specific results for Markov chains, we start by describing the generic reduction that works for generic samplers that are possibly not based on Markov chains.

To do this, we will need to establish additional notation.
For two distributions $p, r$, we write $D_{\max}(p, r)$ to denote $\max_{x \in \supp(p)} \frac{p(x)}{r(x)}$ and $D_{(\infty)}(p, r)$ to denote $\max_{x \in \supp(p)} \left|\frac{p(x)}{r(x)} - 1\right|$. 

There are a couple of observations that will be useful to keep in mind. First is a relationship between the two distances, as stated below. 
\begin{observation} \label{remark:inf-vs-max}
If $D_{(\infty)}(p, r) \leq \eps$, then $D_{\max}(p, r) \leq 1 + \eps$.
\end{observation}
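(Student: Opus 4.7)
The plan is simply to unwind the two definitions and drop the absolute value. Fix any $x \in \supp(p)$. The hypothesis $D_{(\infty)}(p, r) \le \eps$ gives $\left|\frac{p(x)}{r(x)} - 1\right| \le \eps$, and in particular the signed inequality $\frac{p(x)}{r(x)} - 1 \le \eps$, i.e., $\frac{p(x)}{r(x)} \le 1 + \eps$. Taking the maximum over $x \in \supp(p)$ yields $D_{\max}(p, r) \le 1 + \eps$, which is the conclusion.

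There is no real obstacle: the statement is a one-line consequence of the fact that $|a - 1| \le \eps$ implies $a \le 1 + \eps$. The only thing to notice is that both $D_{\max}$ and $D_{(\infty)}$ are defined as maxima over exactly the same set $\supp(p)$, so the pointwise bound transfers to the maximum without any issue about where $r$ might vanish or where $p$ places zero mass.
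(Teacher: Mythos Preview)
Your proof is correct and is exactly the intended argument; the paper leaves this observation without proof because it is the immediate one-line consequence of the definitions that you spell out.
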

\noindent We note that the converse is not true because $D_{\max}(p, r) \leq \eps$ only gives an upper bound of $1 + \eps$ on $\frac{p(x)}{r(x)}$ but, for $D_{(\infty)}(p, r)$ to be small, there must also be a lower bound of $1 - \eps$ on $\frac{p(x)}{r(x)}$ too.

The second observation is that, when $D_{\max}(p, r) \leq \eps$, we can write $r$ as a mixture between $p$ and another distribution $h$, such that the weight of $h$ is at most $\eps$.

\begin{observation} \label{remark:decomp}
If $D_{\max}(p, r) \leq \eps$, then there exists a distribution\footnote{The distribution $h$ is described explicitly in \Cref{eq:generic-perfect-sampler}.} $h$ such that $r = \frac{1}{1 + \eps} \cdot p + \frac{\eps}{1 + \eps} h$
\end{observation}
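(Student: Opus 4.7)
The plan is to construct $h$ explicitly by solving the target identity pointwise for $h(x)$. Rearranging $r(x)=\tfrac{1}{1+\varepsilon}p(x)+\tfrac{\varepsilon}{1+\varepsilon}h(x)$ forces the only candidate
\begin{equation*}
h(x) \;:=\; \frac{(1+\varepsilon)\,r(x) - p(x)}{\varepsilon},
\end{equation*}
with the convention $p(x):=0$ for $x\notin\supp(p)$. With this definition the claimed convex combination holds by construction, so the entire proof collapses to verifying that $h$ is a legitimate probability distribution: nonnegative and summing to one.

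Nonnegativity is where the hypothesis enters. For $x\in\supp(p)$ we need $p(x)\le(1+\varepsilon)r(x)$, i.e., $p(x)/r(x)\le 1+\varepsilon$, which the bound $D_{\max}(p,r)\le\varepsilon$ supplies pointwise (and, a fortiori, gives the weaker $1+\varepsilon$ bound we actually use). For $x\notin\supp(p)$ the numerator collapses to $(1+\varepsilon)r(x)\ge 0$ for free. Note also that $D_{\max}(p,r)<\infty$ forces $\supp(p)\subseteq\supp(r)$, so the ratio $p(x)/r(x)$ is well-defined wherever it appears in the argument.

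Normalization is purely algebraic: using that both $p$ and $r$ are probability distributions,
\begin{equation*}
\sum_x h(x) \;=\; \frac{(1+\varepsilon)\sum_x r(x) - \sum_x p(x)}{\varepsilon} \;=\; \frac{(1+\varepsilon)-1}{\varepsilon} \;=\; 1.
\end{equation*}

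I do not anticipate any real obstacle, since once the equation is read pointwise it is a single linear equation in $h(x)$ per coordinate; the hypothesis is used exactly to guarantee the unique solution lands in $[0,\infty)$, and summing to one is automatic from the fact that $p$ and $r$ share this property. The main item worth flagging for the write-up is just the bookkeeping around $\supp(p)$ versus $\supp(r)$, so that the formula for $h$ is applied unambiguously on all of $\Omega$.
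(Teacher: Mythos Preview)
Your proposal is correct and is essentially the same approach as the paper: the paper defines $h$ by the identical formula $h(z)=\tfrac{1+\eps}{\eps}\bigl(r(z)-\tfrac{1}{1+\eps}\,p(z)\bigr)$ (see the line labelled \texttt{remainder-dist} in \gensampler) and justifies its validity by the same pointwise bound $p(z)/r(z)\le 1+\eps$. Your parenthetical remark that the stated hypothesis $D_{\max}(p,r)\le\eps$ is actually stronger than the $1+\eps$ bound you use is apt; the paper in fact invokes the condition $D_{\max}\le 1+\eps$ when checking that $h$ is a valid distribution, so the observation as written appears to carry a typo that your proof handles gracefully.
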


For a target distribution $\targetdist$, we say that a sampler $\approxsampler_\eps$ is an $(\eps, \max)$-approximate sampler (resp. $(\eps, \infty)$-approximate sampler) iff its output distribution $\outputdist_{\approxsampler_\eps}$ satisfies $D_{\max}(\outputdist_{\approxsampler_\eps}, r) \leq 1 + \eps$ (resp. $D_{(\infty)}(\outputdist_{\approxsampler_\eps}, \targetdist) \leq \eps$). Note that \Cref{remark:inf-vs-max} implies the following.
\begin{observation} \label{remark:inf-vs-max-sampler}
If $\approxsampler_\eps$ is an $(\eps, \infty)$-approximate sampler, then it is also an $(\eps, \max)$-approximate sampler.
\end{observation}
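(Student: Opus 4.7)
The plan is to derive this observation as an immediate corollary of Observation \ref{remark:inf-vs-max}, which already proves the underlying pointwise implication. First I would unfold the two sampler definitions: the hypothesis that $\approxsampler_\eps$ is an $(\eps, \infty)$-approximate sampler is by definition exactly $D_{(\infty)}(\outputdist_{\approxsampler_\eps}, \targetdist) \leq \eps$, and the desired conclusion that $\approxsampler_\eps$ is an $(\eps, \max)$-approximate sampler is by definition exactly $D_{\max}(\outputdist_{\approxsampler_\eps}, \targetdist) \leq 1 + \eps$. So the statement reduces to a pure comparison between the two distance notions applied to the specific pair $(p, r) = (\outputdist_{\approxsampler_\eps}, \targetdist)$.

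The only remaining step is to invoke Observation \ref{remark:inf-vs-max} with this substitution, which converts a $D_{(\infty)}$-bound of $\eps$ into a $D_{\max}$-bound of $1 + \eps$ and yields exactly what is needed. There is no genuine obstacle: no probabilistic argument, no algorithmic content, and no case analysis is required, because sampler-level approximation here is defined purely through the distance between the output and target distributions, and the conversion between the two distances is the content of the previous observation. The only stylistic choice is whether to restate the one-line derivation underlying Observation \ref{remark:inf-vs-max} (namely that $\abs{p(x)/\targetdist(x) - 1} \leq \eps$ implies $p(x)/\targetdist(x) \leq 1 + \eps$ pointwise on $\supp(p)$, and then take the maximum) or to simply cite it; I would opt for the citation since that observation is stated immediately above.
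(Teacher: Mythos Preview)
Your proposal is correct and matches the paper's approach exactly: the paper likewise states this observation without proof, simply noting that \Cref{remark:inf-vs-max} implies it.
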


For any algorithm $\Alg$, we write $T(\Alg)$ to denote its expected running time.

\section{Perfect Sampling from Generic Sampler}

We are now ready to present our two generic reductions. The starting point for both is to use \Cref{remark:decomp} to decompose a distribution $r$ into a convex combination of a computationally cheap part $p$ and an expensive part $h$.
We then flip a coin with success probability corresponding to the contribution of each of the two distributions to decide with of the two we need to sample from.
Provided the probability that we need to sample from the expensive part is sufficiently small, we still obtain polynomial expected running time.
We remark that similar distribution decompositions have been used as building blocks for application-specific samplers, such as the recently introduced ``lazy depth-first'' sampling approach by Anand and Jerrum~\cite{anand2022perfect} on lattice spin systems.

Our two reductions apply the decomposition in two different ways.
The first reduction decomposes the target distribution directly, using an $(\varepsilon, \text{max})$-approximate sampler for the computational cheap part $p$.
The second reduction is a rejection sampling scheme that uses an $(\varepsilon, \infty)$-approximate sampling algorithm for the proposal distribution and applies the decomposition to the acceptance coin, resulting in better running time for large state spaces at the cost of assuming oracle access to a stronger approximate sampler.
In both reductions, exponential time oracles for the target distribution and the output of the approximate sampler are used for the expensive case.




\subsection{First Reduction}

We present our first reduction in \Cref{eq:generic-perfect-sampler}. This reduction can convert any $(\eps, \max)$-approximate sampler $\approxsampler_\eps$ into a perfect sampler. The idea is very simple: using \Cref{remark:decomp}, we can output a sample from $\approxsampler_\eps$ directly with probability $\frac{1}{1 + \eps} \geq 1 - \eps$. With the remaining probability $\frac{\eps}{1 + \eps}$, we compute $h$ and output a sample from $h$. The crux here is that, while computing $h$ is potentially computationally expensive, this only happens with probability $\eps$. Thus, by setting $\eps$ to be sufficiently small, we can make the running time polynomial in expectation.

The full algorithm is presented in \Cref{eq:generic-perfect-sampler}. Note that the distribution $h$ defined on Line~\ref{line:remainder-dist} of \Cref{eq:generic-perfect-sampler} is valid because $D_{\max}(\outputdist_{\approxsampler_\eps}, \targetdist) \leq 1 + \eps$.

\begin{algorithm}[ht]
\caption{\gensampler}
\label{eq:generic-perfect-sampler}
\nonl \textbf{Requires: } 
\begin{itemize}
\item An $(\eps, \max)$-approximate sampler algorithm $\approxsampler_\eps$ for target distribution $\targetdist$, and,
\item An algorithm $\targetcompall_{\targetdist}$ that outputs $\targetdist(x)$ for all $x \in \Omega$ and,
\item An algorithm $\samplercompall_{\approxsampler_\eps}$ that outputs $\outputdist_{\approxsampler_\eps}(x)$ for all $x \in \Omega$.
\end{itemize}
Sample $y$ from $\Ber\left(\frac{1}{1 + \eps}\right)$ \\
\eIf{y = 1}{
Sample $x$ via $\approxsampler_\eps$ \\
\Return $x$
}{
Compute $\targetdist$ from $\targetcompall_{\targetdist}$ \\
Compute $\outputdist_{\approxsampler_\eps}$ via $\samplercompall_{\approxsampler_\eps}$ \\
\For{$z \in \Omega$}{
Let $h(z) \gets \frac{1 + \eps}{\eps} \left(\targetdist(z) - \frac{1}{1 + \eps} \cdot \outputdist_{\approxsampler_\eps}(z)\right)$ \label{line:remainder-dist}
}
Sample $x \sim h$ \\
\Return $x$
}
\end{algorithm}

The properties of the above sampler are stated and proved below.

\begin{theorem} \label{thm:gen-red}
\gensampler is a perfect sampler for $\targetdist$ with expected running time 
\[
O\left(T(\approxsampler_\eps) + \eps \cdot T(\samplercompall_{\approxsampler_\eps}) + \eps \cdot T(\targetcompall_{\targetdist}) + \eps \cdot |\Omega|\right).
\]

\end{theorem}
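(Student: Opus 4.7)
The plan is to verify the two claims separately: correctness (that the output is exactly distributed as $\targetdist$) and the expected running time bound. Neither step should involve any significant obstacle; the whole point of the construction is that both fall out of the explicit formula for $h$ together with a one-line case analysis of the Bernoulli coin.

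For correctness, I would first confirm that the $h$ defined on Line~\ref{line:remainder-dist} is a valid probability distribution. Non-negativity of $h(z)$ reduces to $(1+\eps)\targetdist(z) \geq \outputdist_{\approxsampler_\eps}(z)$; for $z \in \supp(\outputdist_{\approxsampler_\eps})$ this is exactly the $(\eps,\max)$-approximation guarantee $D_{\max}(\outputdist_{\approxsampler_\eps}, \targetdist) \leq 1+\eps$, and for $z \notin \supp(\outputdist_{\approxsampler_\eps})$ both sides satisfy it trivially. That $h$ sums to one is a direct calculation:
\[
\sum_{z \in \Omega} h(z) = \frac{1+\eps}{\eps}\left(\sum_z \targetdist(z) - \frac{1}{1+\eps}\sum_z \outputdist_{\approxsampler_\eps}(z)\right) = \frac{1+\eps}{\eps}\left(1 - \frac{1}{1+\eps}\right) = 1.
\]
This is essentially a restatement of \Cref{remark:decomp}. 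Once $h$ is valid, computing the output distribution is a one-line verification: conditioning on the Bernoulli $y$, for every $x \in \Omega$
\[
\Pr{\text{output} = x} = \frac{1}{1+\eps}\cdot \outputdist_{\approxsampler_\eps}(x) + \frac{\eps}{1+\eps}\cdot h(x) = \frac{1}{1+\eps}\outputdist_{\approxsampler_\eps}(x) + \left(\targetdist(x) - \frac{1}{1+\eps}\outputdist_{\approxsampler_\eps}(x)\right) = \targetdist(x),
\]
which establishes that \gensampler is a perfect sampler for $\targetdist$.

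For the running time, I would split on the value of $y$. On the branch $y=1$, which occurs with probability $\frac{1}{1+\eps} \leq 1$, the algorithm only calls $\approxsampler_\eps$, contributing expected cost at most $T(\approxsampler_\eps)$. On the branch $y=0$, which occurs with probability $\frac{\eps}{1+\eps} \leq \eps$, the algorithm invokes $\targetcompall_{\targetdist}$ once, invokes $\samplercompall_{\approxsampler_\eps}$ once, computes $h$ by iterating over $\Omega$ in time $O(|\Omega|)$, and then draws a single sample from the explicit distribution $h$ in time $O(|\Omega|)$. Summing the two contributions gives expected running time
\[
O\!\left(T(\approxsampler_\eps) + \eps\cdot T(\samplercompall_{\approxsampler_\eps}) + \eps\cdot T(\targetcompall_{\targetdist}) + \eps\cdot |\Omega|\right),
\]
as claimed. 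If there is any subtlety to flag, it is only to be careful that the $|\Omega|$ term absorbs both the construction of $h$ and the final sampling step; otherwise the analysis is entirely routine.
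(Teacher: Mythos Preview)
Your proof is correct and follows essentially the same approach as the paper's own proof: identify the output distribution as the mixture $\frac{1}{1+\eps}\outputdist_{\approxsampler_\eps} + \frac{\eps}{1+\eps}h = \targetdist$, and bound the expected running time by weighting the two branches by $\frac{1}{1+\eps}$ and $\frac{\eps}{1+\eps}$. The only difference is that you spell out explicitly why $h$ is a valid distribution, whereas the paper handles this by a reference to \Cref{remark:decomp}.
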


\begin{proof}
To see that this is a perfect sampler, note that the output probability is the mixture distribution $\frac{1}{1 + \eps} \cdot \outputdist_{\approxsampler_\eps} + \frac{\eps}{1 + \eps} \cdot h$, which is exactly $\targetdist$ as desired.

The expected running time of the algorithm is
\begin{align*}
&O\left(\frac{1}{1 + \eps} \cdot T(\approxsampler_\eps) + \frac{\eps}{1 + \eps} \left(T(\targetcompall_{\targetdist}) + T(\samplercompall_{\approxsampler_\eps}) + |\Omega|\right)\right) \\
&\leq O\left(T(\approxsampler_\eps) + \eps \cdot T(\samplercompall_{\approxsampler_\eps}) + \eps \cdot T(\targetcompall_{\targetdist}) + \eps \cdot |\Omega|\right). \qedhere
\end{align*}
\end{proof}
\subsection{Second Reduction}

Our second reduction starts from an $(\eps, \infty)$-approximate sampler $\approxsampler_\eps$. Note that, by \Cref{remark:inf-vs-max-sampler}, this is a stronger assumption compared to that of the first reduction. However, the benefit of this reduction is that, unlike \Cref{eq:generic-perfect-sampler} which requires the computation of $\targetdist(x), \outputdist_{\approxsampler_\eps}(x)$ for \emph{all} $x \in \Omega$, our second reduction only needs to compute $\targetdist(x), \outputdist_{\approxsampler_\eps}(x)$ for a \emph{single} sample $x \in \Omega$ produced by $\approxsampler_\eps$ (in each iteration). The latter can be more efficient when the state space is large.

The rough high-level idea is that we use rejection sampling based on samples produced by $\approxsampler_\eps$. That is, we will accept a sample $x$ (produced from $\approxsampler_\eps$) with probability $\frac{r(x)}{(1 + \eps) \cdot \outputdist_{\approxsampler_\eps}(x)}$. Now, since $\approxsampler_\eps$ is an $(\eps, \infty)$-approximate sampler, we know that this acceptance probability is at least $\frac{1 - \eps}{1 + \eps}$. Even though we do not know the exact acceptance probability, we can still accept with this $\frac{1 - \eps}{1 + \eps}$ probability first. Then, with the remaining $\frac{2\eps}{1 + \eps}$ probability, we actually compute the exact acceptance probability and accept accordingly. Although the latter computation can be expensive, it is only required with probability $O(\eps)$ and, thus, when $\eps$ is sufficiently small, the expected running time is small. 

Our reduction is given below in \Cref{eq:generic-perfect-sampler-rs}. It should be noted that the success probability for the Bernoulli distribution on Line~\ref{line:z-bern-prob} is valid because $D_{(\infty)}(\outputdist_{\approxsampler_\eps}, \targetdist) \leq \eps$ implies that $\frac{\targetdist(x)}{\outputdist_{\approxsampler_\eps}(x)} \geq \frac{1}{1 + \eps} \geq 1 - \eps$.

\begin{algorithm}[ht]
\caption{\gensamplerrs}
\label{eq:generic-perfect-sampler-rs}
\nonl \textbf{Requires: } 
\begin{itemize}
\item An $(\eps, \infty)$-approximate sampler algorithm $\approxsampler_\eps$ for target distribution $\targetdist$, and,
\item An algorithm $\targetcomp_{\targetdist}$ that, for any given $x \in \Omega$, computes $\targetdist(x)$ and,
\item An algorithm $\samplercomp_{\approxsampler_\eps}$ that, for any given $x \in \Omega$, computes $\outputdist_{\approxsampler_\eps}(x)$.
\end{itemize}
\While{no sample has been returned}{
Sample $x$ via $\approxsampler_\eps$ \\
Sample $y$ from $\Ber\left(\frac{1 - \eps}{1 + \eps}\right)$ \\
\eIf{y = 1}{
\Return $x$
}{
Compute $\targetdist(x)$ from $\targetcomp_{\targetdist}$ \\
Compute $\outputdist_{\approxsampler_\eps}(x)$ via $\samplercomp_{\approxsampler_\eps}$ \\
Sample $z \sim \Ber\left(\frac{1}{2\eps} \cdot \left(\frac{\targetdist(x)}{\outputdist_{\approxsampler_\eps}(x)} - (1 - \eps)\right)\right)$ \label{line:z-bern-prob} \\ 
\If{$z = 1$}{
\Return $x$
}
}
}
\end{algorithm}

We formalize the properties of the algorithm in \Cref{thm:gen-red-rs}.

\begin{theorem} \label{thm:gen-red-rs}
For any\footnote{$1/2$ here is arbitrary, it could be changed to any constant strictly less than 1.} $\eps \leq 1/2$, \gensamplerrs is a perfect sampler for $\targetdist$ with expected running time $O\left(T(\approxsampler_\eps) + \eps \cdot T(\samplercomp_{\approxsampler_\eps}) + \eps \cdot T(\targetcomp_{\targetdist})\right)$
\end{theorem}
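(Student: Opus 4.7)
The plan is to check correctness and then bound the expected running time. For correctness, I would fix an arbitrary $x \in \Omega$ and compute the probability that a single iteration of the while-loop returns $x$. Conditional on $\approxsampler_\eps$ drawing $x$, the iteration returns $x$ either through the $y=1$ branch with probability $\frac{1-\eps}{1+\eps}$, or through the $y=0$ branch followed by $z=1$ on Line~\ref{line:z-bern-prob} with probability $\frac{2\eps}{1+\eps} \cdot \frac{1}{2\eps}\tuple{\frac{\targetdist(x)}{\outputdist_{\approxsampler_\eps}(x)} - (1-\eps)}$. Adding these two contributions and simplifying, the conditional return probability telescopes to $\frac{1}{1+\eps} \cdot \frac{\targetdist(x)}{\outputdist_{\approxsampler_\eps}(x)}$. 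Multiplying by $\outputdist_{\approxsampler_\eps}(x)$ gives $\frac{\targetdist(x)}{1+\eps}$ as the unconditional probability of returning $x$ in any one iteration; summing over $x$, each iteration terminates with probability exactly $\frac{1}{1+\eps}$, and the distribution conditioned on termination is exactly $\targetdist$. So the algorithm is a perfect sampler.

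Before proceeding I would also verify that the $\Ber$ parameter on Line~\ref{line:z-bern-prob} is well-defined in $[0,1]$. Since $D_{(\infty)}(\outputdist_{\approxsampler_\eps}, \targetdist) \leq \eps$, the ratio $\frac{\targetdist(x)}{\outputdist_{\approxsampler_\eps}(x)}$ lies in $[1-\eps, 1+\eps]$ for every $x$ in the support, so the argument of $\Ber$ indeed lies in $[0,1]$; this also ensures $\outputdist_{\approxsampler_\eps}(x) > 0$ on $\supp(\targetdist)$, so the ratio is defined. This is the only step where the stronger $(\eps, \infty)$-approximation, rather than $(\eps, \max)$, is actually used.

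For the running time, the iteration count is geometric with success probability $\frac{1}{1+\eps} \geq 2/3$ for $\eps \leq 1/2$, so the expected number of iterations is $1+\eps = O(1)$. Each iteration invokes $\approxsampler_\eps$ unconditionally, and with probability $\frac{2\eps}{1+\eps} = O(\eps)$ additionally invokes $\samplercomp_{\approxsampler_\eps}$ and $\targetcomp_{\targetdist}$ once each. The expected per-iteration cost is therefore $O\tuple{T(\approxsampler_\eps) + \eps \cdot T(\samplercomp_{\approxsampler_\eps}) + \eps \cdot T(\targetcomp_{\targetdist})}$, and multiplying by the $O(1)$ expected number of iterations yields the claimed bound.

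I do not anticipate a serious obstacle. The only conceptually nontrivial piece is recognizing that the two coins together realize the ideal rejection-sampling acceptance ratio $\frac{\targetdist(x)}{(1+\eps)\outputdist_{\approxsampler_\eps}(x)}$ while skipping the expensive computations of $\targetdist(x)$ and $\outputdist_{\approxsampler_\eps}(x)$ in the $y=1$ branch. The $(\eps, \infty)$-approximation is precisely what keeps the $z$-coin parameter in $[0,1]$, so the construction would fail under the weaker $(\eps, \max)$ assumption used in \gensampler, which is why this second reduction requires a stronger input oracle but in exchange avoids the per-iteration cost of enumerating $\Omega$.
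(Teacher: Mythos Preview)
Your proposal is correct and essentially identical to the paper's own proof: both compute the per-iteration return probability of $x$ as $\frac{\targetdist(x)}{1+\eps}$, conclude perfect sampling, and bound the expected running time by observing that each iteration terminates with constant probability while the expensive oracles are invoked only with probability $O(\eps)$. Your version is slightly sharper in that you identify the exact termination probability $\frac{1}{1+\eps}$ rather than the paper's crude lower bound $\frac{1-\eps}{1+\eps} \geq \frac{1}{3}$, but this makes no difference to the $O(1)$ iteration count.
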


\begin{proof}
We first argue that \gensamplerrs is a perfect sampler for $\targetdist$. To see this, let us consider each iteration of the while-loop. The probability that any $x \in \Omega$ is returned in exactly
\begin{align*}
\outputdist_{\approxsampler_\eps}(x) \cdot \left(\frac{1 - \eps}{1 + \eps} + \frac{2\eps}{1 + \eps} \cdot \left(\frac{1}{2\eps} \cdot \left(\frac{\targetdist(x)}{\outputdist_{\approxsampler_\eps}(x)} - (1 - \eps)\right)\right)\right) = \frac{r(x)}{1 + \eps}.
\end{align*}
Thus, this is the perfect sampler for $r$ as desired.

As for the running time, since each iteration results in terminating with probability at least $\frac{1 - \eps}{1 + \eps} \geq \frac{1}{3}$, the expected running time is at most $O(1)$ times that of the running time of each iteration. Since $\samplercomp_{\approxsampler_\eps}$ and $\targetcomp_{\targetdist}$ are invoked in each iteration with probability $1 - \frac{1 - \eps}{1 + \eps} \leq 2\eps$, the latter is at most $O\left(T(\approxsampler_\eps) + \eps \cdot T(\samplercomp_{\approxsampler_\eps}) + \eps \cdot T(\targetcomp_{\targetdist})\right)$. This yields the desired running time claim.
\end{proof}

\section{Uniform mixing times of finite Markov chains} 
Before applying our generic perfect sampler to Markov chains, we first set up basic notations and recall some standard facts on mixing times.
For simplicity, we will focus on finite state spaces.

\subsection{Definitions and preliminaries}
A \emph{finite Markov chain} is a stochastic process $(X_t)_{t \in \mathbb{N}_0}$ on a \emph{finite state space} $\Omega$, which evolution is dictated by a \emph{transition matrix} $P\in [0,1]^{\Omega \times \Omega}$.
Given a starting state $X_0 \in \Omega$, we sample $X_{t+1} \sim P(X_t, \cdot)$. 
A standard example is a random walk on a finite graph.
Denoting the probability $\Pr{X_{t} =y \mid X_0=x}$ by $p^t(x,y)$, it is easy to verify that $p^t(x,\cdot) = \boldsymbol{1}_x^\top P^t$, where $\boldsymbol{1}_x$ is the vector that is $1$ at $x$, and $0$ otherwise.


\paragraph{Stationary distributions.}
For finite $\Omega$, a distribution $\pi$ is \emph{stationary} if $\pi P = \pi$. If a distribution $\mu$ satisfies the \emph{detailed balance equation} 
\[
\mu(x)P(x,y) = \mu(y) P(y,x), \forall x,y\in \Omega,
\]
then we say $P$ is \emph{reversible} with respect to $\mu$. In fact,  a distribution $\mu$ satisfying the detailed balance equation must also be stationary. 


\paragraph{$\ell_p$ distances.}
Given a distribution $\pi$ and $p \in [0, \infty]$, we define the $\ell_p(\pi)$ norm of a function $f : \Omega \to \^R$ by
\begin{align*}
\norm{f}_p :=& \tuple{\sum_{x\in \Omega} \pi(x) \abs{f(x)}^p}^{1/p} = \E[x\sim \pi]{\abs{f(x)}^p}^{1/p}, \quad\hbox{ for finite }p,\\
\norm{f}_\infty :=& \max_{x\in \Omega} |f(x)|.
\end{align*}
For a transition matrix $P$ with stationary distribution $\pi$, we define 
$q_t(x,y):=\frac{p^t(x,y)}{\pi(y)}$.

Note that the expectation of $q_t(x,y)$ is $1$, as 
$\E[y\sim \pi]{q_t(x,y)} = \sum_{y\in \Omega} q_t(x,y) \pi(y) = 1$.
Further, for reversible Markov chains $q_t$ is symmetric.

The $\ell_p$ distance $\D{p}{t}$ is then defined as
$\D{p}{t}:= \max_{x \in \Omega} \norm{q_t(x,\cdot) - 1}_p$.

\begin{remark}
A couple of remarks are in order. 
    \begin{itemize}
        \item For $p=1$, 
$\D{1}{t}$ captures the standard total variation (TV) distance up to a constant $2$ as 
\[
\D{1}{t} = \max_{x \in \Omega} \sum_{y\in \Omega} \abs{p^t(x, y) - \pi(y)}  .
\]
In particular, we may define the $\varepsilon$-TV distance mixing time for every $\varepsilon > 0$ by
\[
    \tau_{TV}(\varepsilon) := \min\{t: D_{(1)}(t) \le \varepsilon\}.
\]
\item 
For $p=2$, 
$\D{2}{t}$ captures the square root of $\chi^2$ divergence (also known as the variance) between $p^t$ and $\pi$. As such, we will also refer to $\D{2}{t}$ as $\chi$-divergence.
    \end{itemize}
\end{remark}

We remark that $\ell_p$ norms are monotone.
\begin{observation}
    $\D{1}{t} \le \D{2}{t} \le \D{\infty}{t}$.
\end{observation}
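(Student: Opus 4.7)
The plan is to apply the standard monotonicity of $\ell_p$ norms with respect to a probability measure, and then take the maximum over the starting state $x$. Fix $x \in \Omega$ and set $f = q_t(x, \cdot) - 1$, so that $\D{p}{t} = \max_{x \in \Omega} \norm{f}_p$ for each $p \in \{1, 2, \infty\}$.

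For the first inequality $\norm{f}_1 \le \norm{f}_2$, I would invoke Cauchy--Schwarz (equivalently Jensen's inequality applied to the convex map $u \mapsto u^2$) with respect to the probability measure $\pi$:
\[
\norm{f}_1^2 = \tuple{\E[y\sim \pi]{\abs{f(y)}}}^2 \le \E[y\sim \pi]{\abs{f(y)}^2} = \norm{f}_2^2,
\]
where the key point is that $\sum_{y\in\Omega} \pi(y) = 1$, so the constant function $1$ has $\ell_2(\pi)$-norm one. Taking square roots gives $\norm{f}_1 \le \norm{f}_2$.

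For the second inequality $\norm{f}_2 \le \norm{f}_\infty$, I would use the pointwise bound $\abs{f(y)}^2 \le \norm{f}_\infty^2$ for every $y \in \Omega$ and integrate against $\pi$:
\[
\norm{f}_2^2 = \E[y\sim \pi]{\abs{f(y)}^2} \le \norm{f}_\infty^2 \cdot \sum_{y\in\Omega} \pi(y) = \norm{f}_\infty^2.
\]

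Finally, taking the maximum over $x \in \Omega$ on both sides of each inequality yields $\D{1}{t} \le \D{2}{t} \le \D{\infty}{t}$. There is no real obstacle here; this is a two-line consequence of Cauchy--Schwarz plus a trivial pointwise bound, leveraging the fact that $\pi$ is a probability distribution and hence has total mass one.
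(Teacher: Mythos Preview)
Your proof is correct. The paper itself does not give a proof of this observation; it simply remarks that $\ell_p$ norms are monotone and states the inequality, so your argument via Cauchy--Schwarz and the pointwise bound is exactly the standard justification one would supply.
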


\paragraph{Uniform mixing times.}
Mixing times in terms of $\D{\infty}{t}$ are known as uniform mixing time~\cite{morris2005evolving}, which is defined as: 
\[
\tau_U(\eps) := \min \set{ t : \D{\infty}{t} \le \eps}.
\]
In the following, we show that uniform mixing follows from many standard mixing time results.

\subsection{Uniform mixing from $\ell_2$ mixing for reversible Markov chains}
Next we recollect a standard connection between $\ell_2$ and $\ell_\infty$ mixing, which appeared implicitly in~\cite[Corollary 2.1.5]{gine1997lectures} and explicitly in~\cite[Proposition 4.15]{levin2017markov}. 
For completeness we also provide an elementary proof.
\begin{proposition} For any reversible Markov chain,
    $\D{\infty}{2t} = \tuple{\D{2}{t}}^2 = \max_{x\in \Omega} q_{2t}(x,x) - 1$.
\end{proposition}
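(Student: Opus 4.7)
The plan is to go through the $q$-function representation under the invariant measure $\pi$. Reversibility gives the symmetry $q_t(x,y) = q_t(y,x)$ via detailed balance, and combined with Chapman--Kolmogorov this lets us write the key identity
\begin{equation*}
q_{2t}(x,y) \;=\; \sum_{z \in \Omega} \pi(z)\, q_t(x,z)\, q_t(y,z) \;=\; \E[z\sim\pi]{q_t(x,z)\, q_t(y,z)}.
\end{equation*}
I would derive this first from $p^{2t}(x,y) = \sum_z p^t(x,z) p^t(z,y)$ by inserting $\pi(z)/\pi(z)$ and using $p^t(z,y)/\pi(y) = p^t(y,z)/\pi(z)$ from reversibility.

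Next, since $\E[z\sim\pi]{q_t(x,z)} = 1$, subtracting constants gives the centered inner product form $q_{2t}(x,y) - 1 = \E[z\sim\pi]{(q_t(x,z)-1)(q_t(y,z)-1)}$. Setting $x = y$ yields
\begin{equation*}
q_{2t}(x,x) - 1 \;=\; \norm{q_t(x,\cdot) - 1}_2^2,
\end{equation*}
so taking the maximum over $x$ immediately identifies $\max_x \bigl(q_{2t}(x,x) - 1\bigr)$ with $\bigl(\D{2}{t}\bigr)^2$.

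For the equality with $\D{\infty}{2t}$, the plan is a two-sided bound. For the upper bound, apply Cauchy--Schwarz in $\ell_2(\pi)$ to the centered inner product to get $|q_{2t}(x,y) - 1| \le \norm{q_t(x,\cdot) - 1}_2 \norm{q_t(y,\cdot) - 1}_2 \le \bigl(\D{2}{t}\bigr)^2$, hence $\D{\infty}{2t} \le \max_x(q_{2t}(x,x) - 1)$. For the matching lower bound, observe that Jensen's inequality (applied to $q_t(x,\cdot)$, which has $\pi$-mean $1$) gives $q_{2t}(x,x) = \E[z\sim\pi]{q_t(x,z)^2} \ge 1$, so the diagonal entries of $q_{2t}$ lie above $1$ and therefore $\D{\infty}{2t} \ge \max_x |q_{2t}(x,x) - 1| = \max_x(q_{2t}(x,x) - 1)$.

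There is no real obstacle; the only thing to be careful about is bookkeeping the reversibility-based symmetry of $q_t$ and confirming the centering identity before invoking Cauchy--Schwarz, so that the Cauchy--Schwarz bound lands exactly on $\bigl(\D{2}{t}\bigr)^2$ rather than on a weaker quantity.
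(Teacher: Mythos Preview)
Your proposal is correct and follows essentially the same route as the paper: Chapman--Kolmogorov plus reversibility to express $q_{2t}(x,y)-1$ as a centered $\ell_2(\pi)$ inner product, Cauchy--Schwarz for the upper bound, the diagonal identity for the equality with $(\D{2}{t})^2$, and the trivial lower bound from the definition of $\D{\infty}{2t}$. The only cosmetic difference is that your separate Jensen step to show $q_{2t}(x,x)\ge 1$ is redundant, since your own identity $q_{2t}(x,x)-1=\norm{q_t(x,\cdot)-1}_2^2$ already gives nonnegativity.
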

\begin{proof}
    Notice that $p^{r+t}(x,z) = \sum_{y\in \Omega} p^r(x,y) p^t(y,z)$,  and for stationary distribution $\pi$ we have
\begin{align*}
p^{r+t}(x,z) - \pi(z) =&
\sum_{y\in \Omega} p^r (x,y) \tuple{p^t (y,z) - \pi(z) }\\
=&\sum_{y\in \Omega} \tuple{p^r (x,y) - \pi(y) } \tuple{p^t (y,z) - \pi(z) },
\end{align*}
where the first equation follows from stochasticity of the transition probabilities, and the second equation follows from stationarity of $\pi$.

Then by reversibility of the Markov chain, we have that for every $x,z\in \Omega$,
\begin{align}
\frac{p^{r+t}(x,z)}{\pi(z)} - 1 = &\sum_{y\in \Omega} \pi(y) \tuple{\frac{p^r (x,y)}{\pi(y)} -1}  \tuple{\frac{p^t (y,z)}{\pi(z)} -1} \nonumber \\
\overset{\textrm{(reversibility)}}{=} &\sum_{y\in \Omega} \pi(y) \tuple{\frac{p^r (x,y)}{\pi(y)} -1} \tuple{\frac{p^t (z,y)}{\pi(y)} -1} \label{eq:mc-reversible-rearranged} \\
\overset{\textrm{(Cauchy-Schwarz)}}{\le} &\sqrt{\sum_{y\in \Omega} \pi(y) \tuple{\frac{p^r (x,y)}{\pi(y)} -1}^2}\cdot  
\sqrt{\sum_{y\in \Omega} \pi(y) \tuple{\frac{p^t (z,y)}{\pi(y)} -1}^2} \nonumber \\
\le& \D{2}{r} \D{2}{t}. \nonumber
\end{align}
By choosing $r=t$ we have $\D{\infty}{2t} \le \tuple{\D{2}{t}}^2$. By setting $x=z$ in~\eqref{eq:mc-reversible-rearranged} above we have
\begin{align*}
\frac{p^{2t}(x,x)}{\pi(x)} - 1 = \sum_{y\in \Omega} \pi(y) \tuple{\frac{p^t (x,y)}{\pi(y)} -1}^2.
\end{align*}
Combined, we have
\[
\D{\infty}{2t} \le \tuple{\D{2}{t}}^2 = \max_{x\in \Omega} \frac{p^{2t}(x,x)}{\pi(x)} - 1= \max_{x\in \Omega} q_{2t}(x,x) - 1.
\]
Recalling the definition of $\D{\infty}{2t}$, one must have $\D{\infty}{2t}\ge \max_{x\in \Omega} q_{2t}(x,x) - 1$, thus the inequality must also be equality, which concludes the proof.
\end{proof}
\begin{remark}
    We note that the only place that we use reversibility of the Markov chain is that, the time reversal of the Markov chain also mixes in $\ell_2$ equally fast. One could apply the above argument using $\ell_2$ mixing time of the time reversal process (if known, e.g., by a conductance based approach) to get rid of the reversibility requirement. 
\end{remark}
    For an irreducible Markov chain with transition matrix $P$, let $\gamma_*$ be the absolute spectral gap given by 
    \[
    \gamma_* = \min\set{ 1- \abs{\lambda}: \lambda\neq 1 \hbox{ is an eigenvalue for }P}.
    \]
\begin{proposition}
\label{prop:linfty-l2}
For a reversible and irreducible Markov chain with absolute spectral gap $\gamma_*$, and let $\pi_*=\min_{x \in \supp(\pi)} \pi(x)$ be the smallest probability in the stationary distribution,
then $\tau_U(\eps) \le \frac{2}{\gamma_*} \ln \tuple{\frac{1}{\eps \pi_*}}$.
\end{proposition}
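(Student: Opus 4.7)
The approach is to combine the identity just established, $D_{(\infty)}(2t) = \tuple{D_{(2)}(t)}^2$, with a standard spectral bound on the $\ell_2$ mixing of a reversible Markov chain in terms of $\gamma_*$. So the task reduces to proving $D_{(2)}(t) \le (1-\gamma_*)^t/\sqrt{\pi_*}$ and then choosing $t$ to make the resulting $\ell_\infty$ bound at most $\eps$.

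For the $\ell_2$ bound, I would exploit the fact that reversibility makes $P$ self-adjoint on $\ell_2(\pi)$, so the eigenvalues lie in $[-1,1]$ and every eigenfunction other than the constant $\mathbf{1}$ has eigenvalue of absolute value at most $1-\gamma_*$. For a fixed $x \in \Omega$, set $\phi_x(y) = \mathbf{1}[y=x]/\pi(x)$. A direct calculation using reversibility gives $(P^t \phi_x)(y) = p^t(y,x)/\pi(x) = p^t(x,y)/\pi(y) = q_t(x,y)$, so $q_t(x,\cdot) - 1 = P^t(\phi_x - 1)$. Since $\phi_x - 1$ is orthogonal to $\mathbf{1}$ in $\ell_2(\pi)$, contractivity on the orthogonal complement gives
\[
\norm{q_t(x,\cdot) - 1}_2 \le (1-\gamma_*)^t \norm{\phi_x - 1}_2 = (1-\gamma_*)^t \sqrt{\tfrac{1}{\pi(x)} - 1} \le (1-\gamma_*)^t/\sqrt{\pi_*}.
\]
Taking a maximum over $x$ yields $D_{(2)}(t) \le (1-\gamma_*)^t/\sqrt{\pi_*}$.

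Combining with the previous proposition,
\[
D_{(\infty)}(2t) \;=\; \tuple{D_{(2)}(t)}^2 \;\le\; \frac{(1-\gamma_*)^{2t}}{\pi_*} \;\le\; \frac{e^{-2t\gamma_*}}{\pi_*},
\]
using $1-\gamma_* \le e^{-\gamma_*}$. Setting the right-hand side to $\eps$ and solving yields $2t \ge \frac{1}{\gamma_*}\ln(1/(\eps \pi_*))$, so $\tau_U(\eps) \le \frac{1}{\gamma_*}\ln(1/(\eps\pi_*))$, which is stronger than the $\frac{2}{\gamma_*}\ln(1/(\eps\pi_*))$ claimed (the factor 2 provides slack to absorb the need for $2t$ to be an integer).

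There is no real obstacle: both ingredients are standard. The only minor subtlety is recognizing $P^t\phi_x = q_t(x,\cdot)$, which is a one-line consequence of reversibility, and checking that $\phi_x - 1$ is orthogonal to constants in $\ell_2(\pi)$, which follows from $\E[y \sim \pi]{\phi_x(y)} = 1$.
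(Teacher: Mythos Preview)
Your proof is correct and follows essentially the same route as the paper: combine the identity $D_{(\infty)}(2t) = (D_{(2)}(t))^2$ from the preceding proposition with the standard spectral contraction of $P$ on $\ell_2(\pi)$ to bound $D_{(2)}(t)$ in terms of $(1-\gamma_*)^t/\sqrt{\pi_*}$. The paper's proof merely cites the resulting inequality without deriving it and uses the looser bound $(1-\gamma_*)^t/\pi_*$ on $D_{(\infty)}(2t)$, which accounts for the extra factor of $2$ you noticed; your sharper constant is a bonus, not a deviation in method.
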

\begin{proof}
By the previous proposition,
\[
\D{\infty}{2t} = \tuple{\D{2}{t}}^2 \le \frac{(1-\gamma_*)^t}{\pi_*},
\]
we get $\D{\infty}{2t} \le \eps$ by setting $t=\frac{1}{\gamma_*} \ln \tuple{\frac{1}{\eps \pi_*}}$.
\end{proof}
\subsection{Uniform mixing from $\ell_1$ mixing}
We note that the standard separation distance is indeed closely related to the $\ell_1$ distance (see, e.g., \cite[Lemma 6.17]{levin2017markov}). However, separation distance is asymmetric, and what we actually need in our perfect sampling scheme is the ``reverse'' direction of separation distance.

We give two arguments for how $\ell_1$ mixing implies uniform mixing. Firstly, we argue that by setting $\eps = \eps' \pi_*$, where $\pi_* = \min_{x \in \supp(\pi)} \pi(x)$,
then $\eps$-TV distance mixing implies a $\eps'$-uniform mixing.
Notably, reversibility of the Markov chain is not needed  here.

\begin{proposition}
\label{prop:linfty-l1}
   $ \D{\infty}{t} \le \frac{\D{1}{t}}{\pi_*}$.
\end{proposition}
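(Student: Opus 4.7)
The plan is to unfold the definitions and observe that a pointwise bound suffices, with no need for reversibility or any spectral input. Fix a starting state $x \in \Omega$. The goal is to bound $\|q_t(x,\cdot) - 1\|_\infty$ in terms of $\|q_t(x,\cdot) - 1\|_1$, after which taking the maximum over $x$ on both sides gives the claimed inequality.

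First I would write out the $\ell_\infty$ norm explicitly as
\[
\|q_t(x,\cdot) - 1\|_\infty = \max_{y \in \Omega} \left|\frac{p^t(x,y)}{\pi(y)} - 1\right| = \max_{y \in \Omega} \frac{|p^t(x,y) - \pi(y)|}{\pi(y)}.
\]
The key step is then to lower bound each denominator by $\pi_*$; this is valid because for any $y \notin \supp(\pi)$ we also have $p^t(x,y) = 0$ (using stationarity of $\pi$), so those $y$ contribute $0$ to the maximum and we only need the bound $\pi(y) \ge \pi_*$ on $\supp(\pi)$. This gives
\[
\|q_t(x,\cdot) - 1\|_\infty \le \frac{1}{\pi_*} \max_{y \in \Omega} |p^t(x,y) - \pi(y)|.
\]

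Second, since every term in the sum is nonnegative, the maximum is bounded by the full sum, so
\[
\max_{y \in \Omega} |p^t(x,y) - \pi(y)| \le \sum_{y \in \Omega} |p^t(x,y) - \pi(y)| = \|q_t(x,\cdot) - 1\|_1.
\]
Chaining the two bounds and taking the maximum over $x \in \Omega$ yields $D_{(\infty)}(t) \le D_{(1)}(t) / \pi_*$.

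There is no real obstacle here; the only small care point is handling states outside $\supp(\pi)$, where $\pi(y) = 0$ would otherwise make the ratio $p^t(x,y)/\pi(y)$ ill-defined. This is resolved by noting that $\pi$ being stationary forces $p^t(x,y) = 0$ whenever $\pi(y) = 0$ (so $q_t(x,y) - 1$ is interpreted as $0$ on the complement of the support, consistent with the convention implicit in the earlier definitions). The argument is otherwise a two-line manipulation.
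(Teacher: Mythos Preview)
Your proof is correct and follows essentially the same route as the paper: rewrite $|q_t(x,y)-1|$ as $|p^t(x,y)-\pi(y)|/\pi(y)$, replace the denominator by $\pi_*$, and bound the single summand by the full $\ell_1$ sum before taking the maximum over $x$. The paper's proof is the same two-line manipulation, though without your explicit (and more careful) handling of states outside $\supp(\pi)$.
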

\begin{proof}
For any $x,y \in \Omega$,
    \[
    \abs{\frac{p^t(x,y)}{\pi(y)} -1} \le \frac{\abs{p^t(x,y) - \pi(y)}}{\pi_*} \le \frac{\D{1}{t}}{\pi_*}.
    \]
    Taking the maximum yields $ \D{\infty}{t} \le \frac{\D{1}{t}}{\pi_*}$.
\end{proof}

Secondly, by further assuming reversibility of the Markov chain, we note that a polynomial $\ell_1$ mixing bound is a certificate to an inverse polynomial spectral gap for a lazy and reversible Markov chain. This allows us to use the previous subsection to conclude uniform mixing as well. Specifically, let $\tau_{TV}(1/4)$ be the first time such that $\D{1}{t}\le 1/4$, one can show the following.

\begin{proposition}[\text{\cite[Theorem 12.5]{levin2017markov}}] \label{prop:mix=var}
  Let $P$ be a reversible, irreducible, and aperiodic Markov chain.
  It holds that the absolute spectral gap $\gamma_* = \Omega\tuple{\tau_{TV}(1/4)}$.
\end{proposition}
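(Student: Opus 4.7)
The plan is to exploit reversibility via the spectral theorem and test $\ell_1$ mixing against a top non-trivial eigenfunction, which is the classical route to a relaxation-time-vs-mixing-time bound. Since $P$ is reversible, it is self-adjoint on $\ell_2(\pi)$ with respect to $\langle f, g \rangle_\pi := \sum_x \pi(x) f(x) g(x)$, so there is an orthonormal basis of real eigenfunctions $f_0 = \mathbf{1}, f_1, \ldots$ with real eigenvalues $1 = \lambda_0$ and $|\lambda_i| \le 1$ (aperiodicity ensures $\gamma_* > 0$). I would pick an eigenvalue $\lambda$ attaining $|\lambda| = 1 - \gamma_*$ and a corresponding eigenfunction $f$ normalized so that $\norm{f}_\infty = 1$. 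Since $f$ lies in an eigenspace different from that of $\mathbf{1}$, orthogonality gives $\pi(f) = 0$.

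Next, I would evaluate $P^t f = \lambda^t f$ at a state $x^* \in \Omega$ with $|f(x^*)| = 1$, and use $\pi(f) = 0$ to replace raw transition probabilities by differences from stationarity:
\[
|\lambda|^t = |(P^t f)(x^*)| = \Bigl|\sum_y \bigl(p^t(x^*, y) - \pi(y)\bigr) f(y)\Bigr| \le \norm{f}_\infty \sum_y |p^t(x^*, y) - \pi(y)| \le \D{1}{t}.
\]
Setting $t = \tau_{TV}(1/4)$ then gives $(1 - \gamma_*)^{\tau_{TV}(1/4)} \le 1/4$, and taking logarithms yields $\tau_{TV}(1/4) \cdot \ln\!\tuple{1/(1-\gamma_*)} \ge \ln 4$. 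For $\gamma_* \le 1/2$ the elementary bound $\ln(1/(1-x)) \le 2x$ produces $\gamma_* \ge (\ln 4)/(2\,\tau_{TV}(1/4))$, while for $\gamma_* > 1/2$ the conclusion is trivial since $\tau_{TV}(1/4) \ge 1$. In either case we obtain $\gamma_* = \Omega(1/\tau_{TV}(1/4))$, which is the (intended) content of the proposition.

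This argument is entirely classical, so I do not anticipate a substantive obstacle. The only subtleties are bookkeeping: the paper's $\D{1}{t}$ equals twice the total-variation distance, which only shifts the hidden constant; negative eigenvalues need no special treatment because the bound only sees $|\lambda|^t$; and the selection of a test state $x^*$ with $|f(x^*)| = 1$ is automatic from the $\ell_\infty$-normalization of $f$.
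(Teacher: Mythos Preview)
Your argument is correct and is precisely the classical proof of this relaxation-time lower bound; the paper itself does not supply a proof but simply cites \cite[Theorem~12.5]{levin2017markov}, where essentially the same eigenfunction-testing argument appears. (As you implicitly noted, the proposition as stated contains a typo: the intended conclusion is $\gamma_* = \Omega\tuple{1/\tau_{TV}(1/4)}$, not $\gamma_* = \Omega\tuple{\tau_{TV}(1/4)}$.)
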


\section{Perfect sampling from uniform mixing}
As an application of the generic reduction of \Cref{thm:gen-red}, we can derive a simple perfect sampling scheme from mixing time in terms of $\D{\infty}{t}$.

\begin{corollary}
\label{cor:mc-sampler}
Let $(X_t)$ be any Markov chain with state space $\Omega$ and uniform mixing time bound $\tau_U(\eps)$. Furthermore, assume that there exist:
\begin{itemize}
\item an algorithm $\markovstep$ that can sample $X_{t+1} \sim P(X_t, \cdot)$ given $X_t$, and,
\item an algorithm $\transitioncompalgo$ that outputs the transition matrix $P$.
\end{itemize}
Then, there is a perfect sampler for the stationary distribution with expected running time 
\[
O\left(\tau_U\left(\frac{1}{|\Omega|^4}\right) \cdot T(\markovstep) + \frac{1}{|\Omega|^4} \cdot T(\transitioncompalgo)\right).
\]

Furthermore, it terminates within $O\left(\tau_U\left(\frac{1}{|\Omega|^4}\right) \cdot T(\markovstep)\right)$ time except with probability $\frac{1}{|\Omega|^4}$. 

\end{corollary}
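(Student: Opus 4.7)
The plan is to instantiate the first generic reduction \gensampler{} from \Cref{thm:gen-red} with the approximation parameter $\eps = 1/|\Omega|^4$, building the three required oracles from $\markovstep$ and $\transitioncompalgo$.

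For the approximate sampler, fix any starting state $x_0 \in \Omega$ and simulate the Markov chain from $x_0$ for $t = \tau_U(\eps)$ steps by making $t$ calls to $\markovstep$, which produces a sample from $p^t(x_0, \cdot)$. By definition of the uniform mixing time, $D_{(\infty)}(p^t(x_0, \cdot), \pi) \le \eps$, and by \Cref{remark:inf-vs-max} this yields $D_{\max}(p^t(x_0, \cdot), \pi) \le 1 + \eps$. Thus this simulation is an $(\eps, \max)$-approximate sampler with running time $T(\approxsampler_\eps) = O(\tau_U(\eps) \cdot T(\markovstep))$.

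For the two exponential-time oracles we invoke $\transitioncompalgo$ once to obtain the full matrix $P$. Given $P$, we compute the stationary distribution $\pi$ by solving the linear system $\pi P = \pi, \sum_x \pi(x) = 1$, and we compute the output distribution $\outputdist_{\approxsampler_\eps}(\cdot) = \boldsymbol{1}_{x_0}^\top P^t$ by $t$ vector–matrix multiplications. Both post-processing steps run in time polynomial in $|\Omega|$ and $t$, and can be folded into a single call whose cost is $O(T(\transitioncompalgo))$ under the paper's blanket assumption that $T(\transitioncompalgo) = O(|\Omega|^4)$. Feeding these into \Cref{thm:gen-red} gives a perfect sampler for $\pi$ whose expected running time is
\[
O\!\left(T(\approxsampler_\eps) + \eps\, T(\samplercompall_{\approxsampler_\eps}) + \eps\, T(\targetcompall_{\targetdist}) + \eps\, |\Omega|\right)
= O\!\left(\tau_U(1/|\Omega|^4)\cdot T(\markovstep) + \tfrac{1}{|\Omega|^4}\, T(\transitioncompalgo)\right),
\]
matching the desired bound. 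The choice $\eps = 1/|\Omega|^4$ is made precisely to cancel the $O(|\Omega|^4)$ cost of computing $P$ (and its polynomial-time derivatives) inside the expensive branch, so that this branch contributes only $O(1)$ in expectation and does not dominate.

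For the high-probability termination claim, observe that in \gensampler{} the expensive computation of $h$ together with the calls to $\targetcompall_{\targetdist}$ and $\samplercompall_{\approxsampler_\eps}$ is executed only when the Bernoulli coin lands on $0$, which happens with probability $\eps/(1+\eps) \le \eps = 1/|\Omega|^4$. Conditioned on the complementary event, the algorithm merely runs $\approxsampler_\eps$ and returns its output, taking $O(\tau_U(1/|\Omega|^4)\cdot T(\markovstep))$ time. No step in the argument is technically hard; it is essentially bookkeeping on top of the already-established \Cref{thm:gen-red}, with the only design decision being the calibration of $\eps$ against the assumed budget for $\transitioncompalgo$.
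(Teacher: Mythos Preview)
Your proof is correct and follows essentially the same approach as the paper: instantiate \Cref{thm:gen-red} with $\eps = 1/|\Omega|^4$, take the chain run for $\tau_U(\eps)$ steps as $\approxsampler_\eps$, and build both oracles from $\transitioncompalgo$ plus linear algebra. One minor imprecision worth flagging: the cost of computing $\boldsymbol{1}_{x_0}^\top P^t$ is $O(t \cdot |\Omega|^2)$ with $t = \tau_U(\eps)$, which does not literally fold into $O(T(\transitioncompalgo)) = O(|\Omega|^4)$ as you claim; the paper keeps this $\tau_U(\eps)$-dependent term explicit in $T(\samplercompall_{\approxsampler_\eps})$ and then notes that after multiplication by $\eps = 1/|\Omega|^4$ it is absorbed by the leading $\tau_U(\eps)\cdot T(\markovstep)$ term.
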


\begin{proof}
This follows from \Cref{thm:gen-red} by initiating $\targetdist, \eps, \approxsampler_\eps, \targetcomp_{\targetdist}, \samplercomp_{\approxsampler_\eps}$ as follows:
\begin{itemize}
\item Let $\targetdist$ be $\pi$, the stationary distribution.
\item Let $\eps$ be $1/|\Omega|^4$.
\item Let $\approxsampler_\eps$ be $X_{\tau_U\left(\eps\right)}$ when starting with $X_0 = x$, where each step of sampling is done via $\markovstep$. Note that $T(\approxsampler_\eps) \leq O\left(\tau_U\left(\eps\right) \cdot T(\markovstep)\right)$
\item Let $\targetcompall_{\targetdist}$ be the following algorithm: run $\transitioncompalgo$ to get $P$, and then use Gaussian elimination\footnote{If bit complexity is a concern, Edmond's version of Gaussian elimination~\cite{Edmonds1967SystemsOD} guarantees efficient bit complexity.} to solve for $\pi$ using the equations $\pi P = \pi$ and $\pi \mathbf{1}^\top = 1$. Note that $T(\targetcompall_{\targetdist}) \leq O(T(\transitioncompalgo) + |\Omega|^3)$. 
\item Let $\samplercompall_{\approxsampler_\eps}$ be the following algorithm: run $\transitioncompalgo$ to get $P$, and, for all $x \in \Omega$, output $\mathbf{1}_x^\top P^t$ for $t = \tau_U\left(\eps\right)$. We have $T(\samplercompall_{\approxsampler_\eps}) \leq O(T(\transitioncompalgo)  + \tau_U\left(\eps\right) \cdot |\Omega|^4)$.
\end{itemize}
As a result, applying \Cref{thm:gen-red}, we have a perfect sampler for $\pi$ with expected running time $$O\left(T(\approxsampler_\eps) + \eps \cdot T(\samplercompall_{\approxsampler_\eps}) + \eps \cdot T(\targetcompall_{\targetdist}) + \eps \cdot |\Omega|\right) \leq O\left(\tau_U\left(\eps\right) \cdot T(\markovstep) + \frac{1}{|\Omega|^4} \cdot T(\transitioncompalgo)\right).$$

For the high probability statement, notice that the perfect sampler has to call  $\samplercompall_{\approxsampler_\eps}$ and $\targetcompall_{\targetdist}$ only with probability $\eps =1/|\Omega|^4$.
\end{proof}
\begin{remark}
    We note that the choice of $\eps=1/|\Omega|^4$ is arbitrary and suffices for all our applications. If $T(\transitioncompalgo) \gg |\Omega|^4$ one could also choose $\eps = 1/T(\transitioncompalgo)$.
\end{remark}

Now we are ready to complete the proof of our main theorem.

\begin{proof}[Proof of~\Cref{thm:main}]
    Given an $\ell_1$ mixing time $T$, by the submultiplicativity of $\ell_1$ mixing time and \Cref{prop:linfty-l1}, we have $\tau_U(\eps) \le T \ln \frac{1}{\eps \pi_*}$. By \Cref{cor:mc-sampler}, we have a perfect sampler with expected running time $O\tuple{T \ln \frac{\abs{\Omega}}{\pi_*}}$.

    Given an absolute spectral gap $\gamma_*$, by \Cref{prop:linfty-l2} we have $\tau_U(\eps) \le \frac{1}{\gamma_*} \ln \frac{1}{\eps \pi_*}$. Thus by \Cref{cor:mc-sampler}, we have a perfect sampler with expected running time $O\tuple{\frac{1}{\gamma_*} \ln \frac{\abs{\Omega}}{\pi_*}}$.
\end{proof}


\subsection{Simple applications} \label{subsec:simple_applications}

We highlight a number of applications where we either get the first perfect sampler, or get the fastest perfect sampler to date.

\paragraph{Graph $q$-colorings.}
For general graphs of maximum degree $\Delta$, the state-of-the-art result~\cite{vigoda1999improved,chen2019improved} says that if $q\ge (\frac{11}{6} - \eps_0) \Delta $ for some absolute constant $\eps_0 \approx 10^{-5}$, then the flip dynamics has a spectral gap of $\Omega\tuple{1/n}$. The same spectral gap also holds for Glauber dynamics, as their spectral gaps are the same up to constant factors. Combined with our result, we have:
\begin{corollary}\label{cor:col}
    There is a perfect sampler for $q$-colorings on graphs of maximum degree $\Delta$ that runs in expected time $O(n^2)$, provided that $q\ge (\frac{11}{6} - \eps_0) \Delta $ for some absolute constant $\eps_0$.
\end{corollary}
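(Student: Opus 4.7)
The plan is to derive Corollary \ref{cor:col} as a direct application of the reversible case of Theorem \ref{thm:main}. I would instantiate the theorem with the Markov chain $P$ being the flip dynamics of \cite{vigoda1999improved}, or equivalently (up to constant factors in the spectral gap) the single-site Glauber dynamics, on the state space $\Omega$ of proper $q$-colorings of the input graph $G$ on $n$ vertices.

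First I would verify the hypotheses. The chosen chain is reversible with respect to the uniform distribution $\pi$ on $\Omega$, so $\pi_* = 1/|\Omega|$. The cited results of Vigoda \cite{vigoda1999improved} as sharpened by Chen et al.\ \cite{chen2019improved} give an absolute spectral gap $\gamma_* = \Omega(1/n)$ in the regime $q \ge (\tfrac{11}{6} - \eps_0)\Delta$. A single step of the chain picks a random vertex (and, for flip dynamics, a short list of its neighbors) and resamples, which fits within the paper's standing assumption that $T(\markovstep) = O(1)$. The exponential-time oracle $\transitioncompalgo$ for the transition matrix can be implemented by brute-force enumeration of all proper $q$-colorings in time $\poly(|\Omega|) \le \exp(O(n \log q))$, well within the $\exp(\poly(n))$ budget allowed in Corollary \ref{cor:mc-sampler}; and since $\pi$ is uniform on $\Omega$, the stationary-distribution oracle $\targetcompall_\pi$ is automatic given $|\Omega|$.

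Plugging these quantities into the reversible bound of Theorem \ref{thm:main} (equivalently, into Corollary \ref{cor:mc-sampler} with $\eps = 1/|\Omega|^4$) yields a perfect sampler with expected running time
\[
O\left(\frac{1}{\gamma_*}\, \ln \frac{|\Omega|}{\pi_*}\right) \;=\; O\!\left(n \cdot \ln |\Omega|^2\right) \;=\; O(n^2 \log q) \;=\; O(n^2),
\]
where I have used $|\Omega| \le q^n$ and absorbed $\log q$ into the constant, consistent with the paper's implicit convention in this section of treating $q$ and $\Delta$ as constants and tracking only the dependence on $n$.

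I do not anticipate any serious obstacle: the argument is a pure black-box instantiation of the main theorem. The only verification worth flagging is the availability of the exponential-time oracles, which is immediate since $\pi$ is uniform over a set of size at most $q^n$ and the transition matrix of (flip or Glauber) dynamics has polynomially many nonzero entries per row. The price paid for this generality is the logarithmic $\log q$ overhead hidden in the final $O(n^2)$, which is absent in the approximate-sampling bound of \cite{chen2019improved} but is intrinsic to a black-box reduction that must overcome the disparity between $\ell_1$ and $\ell_\infty$ (equivalently uniform) mixing, exactly as discussed in the remarks following Theorem \ref{thm:main}.
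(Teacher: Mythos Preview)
Your proposal is correct and follows essentially the same approach as the paper: cite the $\Omega(1/n)$ spectral gap for flip/Glauber dynamics from \cite{vigoda1999improved,chen2019improved}, note that the stationary distribution is uniform on a set of size at most $q^n$, and plug into the reversible case of Theorem~\ref{thm:main}. The paper's derivation is a one-line invocation of the main theorem, and your more explicit verification of the oracle hypotheses and the $\log q$ absorption only makes the argument more complete.
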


Assuming a girth-assumption, one can further reduce the constant $11/6$, a direction in which there has been a long line of work (see, e.g.,~\cite{jonasson2002uniqueness,frieze2001randomly,molloy2002glauber,dyer2003randomly,dyer2004randomly,dyer2006randomly,gamarnik2015strong,efthymiou2019improved}). Recently, a remarkable breakthrough is made by Chen, Liu, Mani and Moitra~\cite{chen2023strong} that shows a $\Omega(1/n)$ spectral gap provided that $q \ge \Delta + 3$ on graphs of maximum degree $\Delta$ and girth $g(\Delta) = \Omega_{\Delta}(1)$.
Combined with our result, one gets a perfect sampler.
\begin{corollary}\label{cor:col-girth}
    There is a function $g(\Delta)$ such that, for graphs of maximum degree $\Delta$ and girth $g(\Delta) = \Omega_{\Delta}(1)$, there is a perfect sampler that outputs a uniformly random $q$-coloring in expected time $O(n^2)$, provided that $q \ge \Delta + 3$.
\end{corollary}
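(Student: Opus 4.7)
The plan is to apply the reversible case of \Cref{thm:main} directly to the Glauber dynamics for $q$-colorings, using the spectral gap bound of Chen, Liu, Mani and Moitra~\cite{chen2023strong} as a black box. Concretely, I would proceed as follows.

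First, fix $g(\Delta)$ to be the girth threshold from \cite{chen2023strong} under which the Glauber dynamics on proper $q$-colorings achieves an absolute spectral gap $\gamma_* = \Omega(1/n)$ whenever $q \ge \Delta + 3$. The Glauber dynamics is reversible with respect to the uniform distribution $\pi$ on proper $q$-colorings, so the reversible branch of \Cref{thm:main} applies. It remains to estimate the quantity $\ln(|\Omega|/\pi_*)$ appearing in the running time.

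Since $\pi$ is the uniform distribution on the set $\Omega$ of proper $q$-colorings, we have $\pi_* = 1/|\Omega|$ and hence
\[
\ln \frac{|\Omega|}{\pi_*} = 2 \ln |\Omega| \le 2 n \ln q = O(n),
\]
where we used the trivial bound $|\Omega| \le q^n$ and the fact that $q \le \Delta + O(1)$ is a constant given a bounded-degree assumption (or, if $\Delta$ is allowed to grow, we absorb the $\ln q$ into a $\ln n$ factor, still giving $\tilde O(n)$). A single Glauber step can be simulated in constant time under the standard model of computation for the chain. Plugging into the reversible bound of \Cref{thm:main} yields an expected running time
\[
O\!\left(\frac{1}{\gamma_*} \ln \frac{|\Omega|}{\pi_*}\right) = O(n \cdot n) = O(n^2),
\]
as claimed.

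There is no substantial obstacle: the corollary is essentially a plug-in of the CLMM spectral gap into our black-box reduction. The only minor points to be careful about are (i) verifying that the Glauber dynamics satisfies the mild computational assumptions of \Cref{thm:main} (single-step simulation in constant time and explicit access to the transition matrix, both of which are immediate for vertex-wise Glauber updates on a bounded-degree graph), and (ii) handling the $\ln q$ factor if one wishes to be fully uniform in $\Delta$; in the bounded-degree regime this is a constant and can be suppressed in the $O(n^2)$ bound.
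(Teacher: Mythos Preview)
Your proposal is correct and matches the paper's approach exactly: the paper gives no explicit proof, stating only that the corollary follows by combining the spectral-gap bound of \cite{chen2023strong} with \Cref{thm:main}. One minor slip: the hypothesis is $q \ge \Delta + 3$, not $q \le \Delta + O(1)$, but the paper likewise suppresses the $\ln q$ factor by implicitly treating $\Delta$ (and hence $q$ in the regime of interest) as a constant.
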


Prior to our result, the best known perfect sampler either requires $(8/3+o(1))\Delta$ colors~\cite{jain2021perfectly} for general graphs, or is based on deterministic approximation~\cite{liu2022correlation} and requires $\approx 2\Delta$ colors for general graphs and $\approx 1.76\Delta$ colors for triangle-free graphs, where the latter algorithms suffered from a huge polynomial running time of $n^{\exp(\poly(q))}$ (see \cite{liu2019deterministic} for details). Parallel to our work a deterministic approximation algorithm appeared in the literature \cite{chen2024deter}, which, combined with the reduction from \cite{jerrum1986random}, yields a perfect sampler for the same regime as \Cref{cor:col,cor:col-girth}, but with running time in in $n^{f(q,\Delta)}$, where $f\in \Delta^{O(\log\log\Delta)}\log q$.

\paragraph{Spin systems.}
Significant progress has been made in understanding the mixing time of various dynamics for spin systems. 
For ease of exposition, we focus on $2$-spin systems. Given a graph $G=(V,E)$, a configuration $\sigma$ assigns a spin from $\set{0,1}$ to each vertex. Each edge of the graph is associated with a $2\times 2$ interaction matrix $A$, which models the interaction of the spins. Each vertex may also come with a preference on one of the spins (also known as the \emph{external fields}), which we denote by a parameter $\lambda_v$. Formally, we define the weight of a configuration to be 
\[
w(\sigma) = \prod_{e=(u,v)} A_{\sigma(u), \sigma(v)} \prod_{v\in V: \sigma(v)=1} \lambda_v. 
\]
Then, the Gibbs distribution (also known as the Boltzmann distribution) of the spin system is defined over the configurations, with the sampling probability of each configuration being proportional to its weight. Formally,

$\Pr{\sigma} = \frac{1}{Z_G} w(\sigma)$,

where $Z_G$ is the normalizing constant (also known as the partition function) for the spin system.

Without loss of generality, any symmetric interaction $A$ can be written in the form of $A=\begin{pmatrix}
    \beta & 1\\
    1 & \gamma
\end{pmatrix}$. 
Moreover, it is often assumed that $\lambda_v$ takes the same value $\lambda$ for all vertices, meaning that a $2$-spin system is then specified  by three parameters $(\beta,\gamma,\lambda)$.
If $\beta \gamma>1$, the system is known to be in the ferromagnetic regime and tends to favor agreements across edges; and if $\beta \gamma <1$, the system is known as anti-ferromagnetic and favors disagreements. The case of $\beta \gamma = 1$ means there is no interaction across edges. In the case of $\beta=\gamma$ we have the celebrated Ising model, which can be seen as a cut generating polynomial. If $\beta=1$ and $\gamma=0$, this is known as the hardcore model or the independence polynomial (as a generating polynomial of independent sets). If $\lambda = 1$, we say that the model has zero external fields, and non-zero external field if $\lambda\neq 1$. For the Ising model, $\lambda>1$ is symmetric to $\lambda<1$. Without loss of generality, one can assume $\lambda\le 1$. \footnote{This is also true for vertex-dependent external fields, provided that they are consistently in the same direction.}

In the antiferromagnetic case, the uniqueness regime of the model has been well understood and is precisely where efficient sampling and approximate counting algorithms are possible (unless $NP=RP$ as shown by \cite{sly2010computational,sly2012computational,galanis2014improved}). 
For any degree $\Delta \ge 3$, there is a so-called up-to-$\Delta$ unique regime~\cite{li2013correlation} for the parameters $(\beta,\gamma,\lambda)$ where efficient algorithms for approximate counting and sampling are known. 
If the parameters are strictly in the interior of the regime with a slackness $\delta$, this is often called $\delta$-unique~\cite{chen2020rapid,chen2022optimal}. The running time of the approximate counter or sampler often requires $\delta$ to be at least inverse polynomial, or even a constant for some deterministic approximate counting algorithms.
Prior to our work, perfect samplers based on deterministic approximate counting were the only algorithms known to work throughout the uniqueness regime. These algorithms are based on a rejection sampling reduction~\cite[Theorem~3.3]{jerrum1986random}, which also shows that for self-reducible problems, deterministic approximate counting imply existence of perfect samplers. However, a drawback of such constructions is that deterministic approximate counting algorithms often suffer from a huge polynomial runtime, usually of order $n^{\ln \Delta}$ for graphs of maximum degree $\Delta$ (see e.g.,~\cite{weitz2006counting,sinclair2012approximation,li2012approximate,li2013correlation} or~\cite{barvinok2016combinatorics,patel2017deterministic,liu2022correlation}). 
By constructing perfect samplers from Markov chains, this significantly improves the runtime of the resulting perfect sampler.

\begin{theorem}[\cite{chen2020rapid,chen2022optimal}]
Fix any $\delta>0$, degree $\Delta \ge 3$, for any $(\beta,\gamma,\lambda)$ in the up-to-$\Delta$ 
 unique regime of antiferromagnetic $2$-spin model with a $\delta$-unique slackness, and any graph of maximum degree $\Delta$, the absolute spectral gap of the Glauber dynamics satisfy
    $\gamma_* \ge \Omega_{\delta}(1/n)$.
\end{theorem}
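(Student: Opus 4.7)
The plan is to use the spectral independence framework of Anari--Liu--Oveis Gharan, combined with the potential-function contraction analysis developed for 2-spin systems in the uniqueness regime. For any partial pinning $\tau$ on a subset $\Lambda \subseteq V$ that is consistent with the Gibbs measure, I would consider the \emph{influence matrix} $\+I^\tau$ indexed by $V \setminus \Lambda$, whose entry $\+I^\tau(u,v) = \Pr{\sigma(v)=1 \mid \tau, \sigma(u)=1} - \Pr{\sigma(v)=1 \mid \tau, \sigma(u)=0}$ quantifies the signed influence of $u$'s spin on $v$'s marginal, with the diagonal set to zero. The first goal is to prove \emph{spectral independence}: $\lambda_{\max}(\+I^\tau) \le C_\delta$ uniformly over all feasible $\tau$, for some constant $C_\delta$ depending only on $\delta$ and $\Delta$.

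To establish this, I would first reduce influences on a general graph of maximum degree $\Delta$ to influences on Weitz's self-avoiding walk tree, a standard reduction that is valid for any 2-spin system. On the tree, the marginal recursion is a one-dimensional map, and the $\delta$-unique regime guarantees that, after reparametrizing by a carefully chosen potential function $\Phi$, this recursion is a strict contraction with rate $1 - \Omega(\delta)$. This contraction implies that $|\+I^\tau(u,v)|$ decays geometrically in the tree distance between $u$ and $v$, so summing the geometric series bounds the row sums of $\+I^\tau$, and hence $\lambda_{\max}(\+I^\tau)$ via Gershgorin.

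Once spectral independence is in place, a local-to-global argument (Alev--Lau, Anari--Liu--Oveis Gharan--Vinzant) yields a spectral gap bound of the form $\gamma_* \ge \Omega(n^{-c(\delta)})$. To sharpen this to the optimal $\Omega_\delta(1/n)$ claimed in the theorem, I would invoke the entropic-independence / approximate-tensorization-of-entropy machinery, which upgrades spectral independence to an $\Omega(1/n)$ bound on the spectral gap (and even on the modified log-Sobolev constant), provided the single-vertex marginals are bounded away from $0$ and $1$ by a constant uniformly over pinnings. The latter marginal bound is itself a consequence of the $\delta$-unique slackness together with the bounded degree assumption, and is standard to verify by a direct tree-recursion argument on extremal boundary conditions.

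The main obstacle is the construction of the potential function $\Phi$ that delivers contraction throughout the entire up-to-$\Delta$ uniqueness region with slackness $\delta$. The construction must be tailored to the parameters $(\beta,\gamma,\lambda)$, and verifying the contraction inequality reduces to a one-dimensional optimization whose positivity is delicate near the boundary of the uniqueness region; this is precisely where the careful analysis of the uniqueness threshold for antiferromagnetic 2-spin systems enters and absorbs most of the work.
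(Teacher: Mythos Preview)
The paper does not prove this theorem at all: it is quoted verbatim from \cite{chen2020rapid,chen2022optimal} and used as a black-box input to the perfect-sampling reduction (\Cref{thm:main}) to derive the $O(n^2)$ perfect sampler for antiferromagnetic $2$-spin systems. So there is no ``paper's own proof'' to compare against; the paper's entire argument for the corollary is to plug the cited spectral gap into \Cref{cor:mc-sampler}.

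That said, your sketch is an accurate high-level reconstruction of the route taken in the cited works: spectral independence via the influence matrix, reduction to the self-avoiding-walk tree, contraction of the tree recursion under a suitable potential $\Phi$ throughout the $\delta$-unique regime, and then the upgrade from the crude $n^{-O(C_\delta)}$ gap to $\Omega_\delta(1/n)$ via entropy factorization, which does require the uniform marginal lower bound you mention. One small correction of emphasis: the first local-to-global step already gives a gap of the form $n^{-1-O(C_\delta)}$ (a fixed polynomial once $\delta$ is fixed), not merely $n^{-c(\delta)}$ with unspecified $c$; and the final $\Omega_\delta(1/n)$ in \cite{chen2022optimal} goes through approximate tensorization of entropy rather than a direct spectral argument, so the spectral-gap bound is read off from the modified log-Sobolev constant.
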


\begin{corollary}
    There is a perfect sampler for the antiferromagnetic $2$-spin model in the up-to-$\Delta$ unique regime on graphs of maximum degree $\Delta$, that runs in expected time in $O(n^2)$.
\end{corollary}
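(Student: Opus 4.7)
The plan is to apply Theorem \ref{thm:main} directly, using the cited spectral gap bound for Glauber dynamics on antiferromagnetic $2$-spin systems in the up-to-$\Delta$ uniqueness regime. First, I would invoke Glauber dynamics on the state space $\Omega = \set{0,1}^V$ as my Markov chain $P$; each step can be simulated in constant time (picking a vertex and resampling its spin from its local conditional), consistent with the standing assumption of the main theorem. The cited theorem of Chen--Feng--Yin--Zhang (and successors) gives absolute spectral gap $\gamma_* = \Omega_\delta(1/n)$ uniformly over graphs of maximum degree $\Delta$ in the $\delta$-unique slackness regime. Plugging into the reversible case of Theorem \ref{thm:main} yields an expected running time of $O\tuple{\frac{1}{\gamma_*} \ln\frac{\abs{\Omega}}{\pi_*}} = O\tuple{n \cdot \ln\frac{2^n}{\pi_*}}$.

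The remaining ingredient is to show that $\ln(1/\pi_*) = O(n)$, which together with $\ln \abs{\Omega} = n\ln 2$ delivers the claimed $O(n^2)$ bound. Since $\pi(\sigma) = w(\sigma)/Z_G$, I would lower bound $w(\sigma)$ and upper bound $Z_G$ by elementary per-edge and per-vertex estimates: for fixed antiferromagnetic parameters $(\beta,\gamma,\lambda)$ with $\beta,\gamma>0$ and $\lambda>0$, every configuration satisfies $w(\sigma) \ge (\min\set{\beta,\gamma,1})^{\abs{E}} (\min\set{\lambda,1})^{\abs{V}}$, and $Z_G \le 2^n \cdot (\max\set{\beta,\gamma,1})^{\abs{E}} (\max\set{\lambda,1})^{\abs{V}}$. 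Since $\abs{E} \le n\Delta/2$, this gives $\ln(1/\pi_*) = O_{\beta,\gamma,\lambda,\Delta}(n)$, which is exactly what is needed.

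I do not expect any real obstacle: both pieces (the spectral gap and the exponential lower bound on $\pi_*$) are off-the-shelf, and the main theorem does the actual combinatorial work of converting a spectral gap into a perfect sampler. The only mild subtlety worth flagging is that the $\delta$-unique slackness is required so the hidden constants in $\gamma_*= \Omega_\delta(1/n)$ are genuine constants rather than depending on $n$; the corollary's ``up-to-$\Delta$ unique regime'' should be read as assuming such a slackness $\delta>0$, exactly as in the cited theorem, and the hidden constant in the final $O(n^2)$ absorbs dependence on $\beta,\gamma,\lambda,\Delta,\delta$.
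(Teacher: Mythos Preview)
Your proposal is correct and takes essentially the same approach as the paper, which states the corollary as an immediate consequence of the cited spectral gap bound together with Theorem~\ref{thm:main}, without spelling out the $\pi_*$ estimate. One minor caveat: your weight lower bound presumes $\beta,\gamma>0$, so for the hardcore case ($\gamma=0$) you should instead restrict to $\sigma\in\supp(\pi)$ (independent sets), where $\ln(1/\pi_*)=O_{\lambda}(n)$ is equally immediate.
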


We remark that for the hardcore model, there have been numerous attempts in constructing a fast perfect sampler, none of which can go up to the uniqueness threshold (see, e.g.,~\cite{fill2000randomness,guo2019uniform}). Our perfect sampler runs in $O(n^2)$ time instead of linear time, but can work all the way up to the uniqueness threshold with no restrictions on the input graph.

\paragraph{Ferromagnetic Ising model.} We lift existing spectral gap bounds for the subgraph world process~\cite{jerrum1993polynomial} in the presence of non-zero uniform external field $\lambda\neq 1$, and for the random cluster dynamics~\cite{guo2018random} in the case of zero external field. Again, this significantly improves upon existing perfect sampler based on deterministic approximation~\cite{barvinok2016combinatorics,liu2019ising}\footnote{Technically, ferromagnetic Ising model is not self-reducible. However, its equivalent representation, the subgraphs world model is self-reducible~\cite{randall1999sampling}.}.
\begin{theorem}[\cite{jerrum1993polynomial,chen2021spectral}]
Fix any $\lambda\neq 1$, degree $\Delta \ge 3$, for any graph of maximum degree $\Delta$, the absolute spectral gap of the subgraph-world process on the Ising model with non-zero external field satisfy
    $\gamma_* \ge \Omega_{\lambda,\Delta}(1/n)$.
\end{theorem}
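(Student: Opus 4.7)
The plan is to bound the absolute spectral gap of the subgraph-world chain via the canonical paths / multicommodity flow technique of Jerrum and Sinclair, with the sharper $\Omega(1/n)$ rate obtained through the spectral independence framework of Chen-Liu-Yin. Recall that for the Ising model on $G=(V,E)$ with edge activity $\beta$ and nonzero external field $\lambda$, the subgraph world is a Markov chain on subsets $A \subseteq E$, augmented by auxiliary ``defect'' markings at vertices when $\lambda \neq 1$; its stationary weight is a product of edge activities (in $\tanh \beta$) and vertex/defect activities (depending on $\lambda$), subject to a parity constraint, and transitions toggle a single randomly chosen edge (or move a defect) with a Metropolis filter. The Poincar\'e inequality $\gamma_* \ge 1/(\ell \rho)$, where $\ell$ is the length and $\rho$ the congestion of a canonical flow, then reduces the task to constructing a low-congestion flow.

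First, I would specify canonical paths. For any ordered pair of states $(S,T)$, the symmetric difference $S \triangle T$, taken together with the defect differences, decomposes into edge-disjoint closed walks on non-defect vertices and paths between defect vertices. Fix such a decomposition via a lexicographic ordering on edges, and unwind each component by toggling one edge at a time in a fixed order, carrying the defect along when unwinding a path. This yields a deterministic path from $S$ to $T$ of length $\ell = O(|E|) = O(n \Delta)$.

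Second, to bound the congestion through a single transition $t=(U,U')$, I would apply the classical encoding argument: given a canonical path through $t$, define the witness state $W := S \triangle T \triangle U$ (extended analogously on the defect coordinates). The map $(S,T) \mapsto W$ is injective up to $O(1)$-bit ambiguity given $t$, because $U$ together with the partial-unwinding state at time $t$ determines which half-processed edges belong to $S \setminus T$ versus $T \setminus S$. Combined with the weight comparison
\[
\pi(S)\pi(T) \le C(\beta,\lambda,\Delta) \cdot \pi(U)\pi(W),
\]
where the constant absorbs the bounded number of edges and defects by which $W$ differs from the natural ``join'' of $S$ and $T$, summing over all paths through $t$ and dividing by $\pi(U) Q(U,U')$ gives $\rho = O_{\beta,\lambda,\Delta}(1)$. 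Coupled with $\ell = O(n\Delta)$ this yields $\gamma_* \ge \Omega_{\lambda,\Delta}(1/n)$ after absorbing constants depending only on $\beta,\lambda,\Delta$.

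The main obstacle is the weight comparison in the presence of a nonzero external field. When $\lambda = 1$ only even subgraphs carry nonzero weight and the defect bookkeeping is trivial, but for $\lambda \neq 1$ the number of defects of $W$ can exceed the naive expectation by up to two per component of $S \triangle T$, each contributing a multiplicative factor polynomial in $\lambda$ and $\Delta$. Getting the \emph{sharp} $\Omega(1/n)$ rate (rather than the weaker $1/\poly(n)$ bound that raw canonical paths deliver) requires replacing or augmenting this input with the spectral independence analysis of Chen-Liu-Yin: establish that the pairwise influence matrix of the subgraph world has bounded operator norm at all pinned subsystems, and then invoke local-to-global to bootstrap the Poincar\'e constant to the optimal $\Omega(1/n)$ rate.
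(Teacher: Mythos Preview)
The paper does not supply its own proof of this statement: the theorem is imported wholesale from the cited references and used only as a black-box input to the applications in \Cref{subsec:simple_applications}. There is therefore nothing in the paper to compare your proposal against.

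As for the proposal on its own terms, it contains a real gap and an internal contradiction. In the congestion paragraph you assert $\rho = O_{\beta,\lambda,\Delta}(1)$, but the encoding argument you describe only bounds the numerator $\sum_{(S,T)\ni t} \pi(S)\pi(T)$ by $C\,\pi(U)$; dividing by $Q(t) = \pi(U) P(U,U')$ still leaves a factor $1/P(U,U') = \Theta(m)$, so the congestion one actually obtains is $\Theta(m)$, not $O(1)$. Combined with path length $\ell = \Theta(m)$ this yields at best $\gamma_* = \Omega(1/m^2) = \Omega_{\Delta}(1/n^2)$, which is precisely the ``weaker $1/\poly(n)$ bound'' you yourself concede in the final paragraph. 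That paragraph then proposes to close the gap by ``invoking'' spectral independence and local-to-global, but gives no argument for why the pairwise influence matrix of the subgraph-world distribution has bounded operator norm under all pinnings; establishing that bound is the entire substance of the second cited reference, and naming the framework is not a proof. In short, you have correctly identified which techniques the two citations point to, but you have not carried out either one.
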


\begin{theorem}[\cite{guo2018random}]
    The absolute spectral gap of the random cluster dynamics at $q=2$ satisfy
    $\gamma_* \ge \Omega(\frac{1}{n^4 m^2})$.
\end{theorem}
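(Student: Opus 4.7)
The plan is to transfer a known polynomial spectral gap for the Jerrum--Sinclair subgraph-world (even subgraphs) chain over to the random cluster Glauber dynamics at $q=2$ via the Markov chain comparison technique. The starting observation is that the random cluster measure at $q=2$ is in correspondence with the ferromagnetic Ising model through the Edwards--Sokal coupling, and the Ising partition function in turn admits a high-temperature expansion as a weighted sum over even subgraphs. Hence the subgraph-world process and the random cluster dynamics sample from two stationary distributions that are closely related, up to a bijective/coupling transformation with polynomial-size overhead.

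First, I would invoke the classical fact (Jerrum--Sinclair, 1993) that the subgraph-world process on a graph with $m$ edges has a polynomial spectral gap, say of order $\Omega(1/\mathrm{poly}(n,m))$, under the appropriate weighting induced by the ferromagnetic interaction. This is the ``base'' chain whose mixing I want to reuse. Second, I would define the random cluster dynamics carefully: it is Glauber dynamics on subsets of edges $E' \subseteq E$, with stationary measure proportional to $p^{|E'|}(1-p)^{|E \setminus E'|} \cdot 2^{c(E')}$, where $c(E')$ is the number of connected components.

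Third, I would set up a comparison a la Diaconis--Saloff-Coste between the two chains. Since they act on different state spaces, the cleanest route is to lift both to a joint chain on (random cluster configuration, even subgraph) pairs coupled through Edwards--Sokal / high-temperature duality, and then project back. For each transition of the subgraph-world chain I would construct a canonical path of transitions of the random cluster Glauber dynamics simulating it; the path length is $O(m)$ since a single edge flip together with parity-correction walks uses $O(n)$ intermediate moves, and each intermediate random cluster configuration has stationary weight within a bounded polynomial ratio of the endpoints.

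The main obstacle is controlling the comparison loss to land at exactly $\Omega(1/(n^4 m^2))$. This decomposes into: (i) bounding canonical-path length, which naturally yields an $O(n^2)$ factor from the quadratic dependence of the comparison lemma on path length; (ii) bounding congestion, where one must argue that no single random cluster transition is overused across all simulated subgraph-world moves, giving an additional $O(m^2)$ factor; and (iii) controlling the ratio of stationary probabilities along the simulation, which one handles using the bounded maximum degree together with the fact that $q=2$ makes the component-count factor $2^{c(E')}$ change by at most a constant per edge flip. Combining (i)--(iii) with a base spectral gap of order $\Omega(1/\mathrm{poly}(n,m))$ for the subgraph-world chain yields the claimed $\Omega(1/(n^4 m^2))$ bound, and the delicate step throughout is picking canonical paths that avoid creating ``bottleneck'' configurations of exponentially small random-cluster weight.
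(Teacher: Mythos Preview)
The paper does not prove this statement; it is quoted verbatim from \cite{guo2018random} (Guo--Jerrum) and used as a black box to feed into \Cref{thm:main}. There is thus no proof in the paper to compare your proposal against.

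As a sketch of the cited result, your outline has real gaps. Diaconis--Saloff-Coste comparison requires the two chains to share a state space, but the subgraph-world process acts on \emph{even} subgraphs while random-cluster Glauber dynamics acts on \emph{all} edge subsets; your suggestion to ``lift both to a joint chain on (random cluster configuration, even subgraph) pairs'' is not a standard device, and without a precise joint measure and projection maps you cannot invoke comparison. You also rely on ``bounded maximum degree'' in step~(iii), yet the theorem as stated (and as proved in \cite{guo2018random}) carries no degree assumption, so this step would not go through. Finally, the factor bookkeeping is purely heuristic: you posit a base gap of $\Omega(1/\mathrm{poly}(n,m))$ and then multiply by $n^2$ and $m^2$ losses, but that does not land on $n^4 m^2$ unless the base gap is exactly $\Omega(1/n^2)$, which you never establish. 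Guo--Jerrum's actual argument builds a multicommodity flow directly on the random-cluster chain, using the $q=2$ structure to control congestion; the Jerrum--Sinclair ideas enter in the flow construction itself rather than via a black-box comparison with a chain on a different space.
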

By noting that $\pi_* \ge \frac{1}{\beta^m \lambda^n}$ for $\lambda>1$, and $\pi_* \ge \frac{1}{\beta^m}$ for zero external fields, and the fact that each step of a random cluster dynamics can be simulated in time $O(m)$, we have that
\begin{corollary}
    There is a perfect sampler for the weighted even subgraphs and for the ferromagnetic Ising model on graphs of maximum degree $\Delta$, and it runs in expected time $O_{\beta,\lambda}(n (m+n))$ for non-zero external field and $O(n^4 m^4 \ln \beta)$ for zero external field.
\end{corollary}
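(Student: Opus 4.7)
The plan is to invoke Theorem~\ref{thm:main} twice, once with each of the two spectral gap bounds stated just above the corollary, and to verify that the remaining problem-specific parameters (namely $\abs{\Omega}$, $\pi_*$, and per-step simulation cost) produce the claimed running times. For both settings, a perfect sample of the auxiliary chain (subgraph-world or random cluster) yields a perfect sample of the ferromagnetic Ising configuration via a standard coupling (Jerrum--Sinclair for subgraph-world, Edwards--Sokal for random cluster), which adds only an $O(m)$ post-processing step and does not affect asymptotics.

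For the case $\lambda \neq 1$, I would apply the second (spectral-gap) bound of Theorem~\ref{thm:main} to the subgraph-world dynamics, using $\gamma_* \geq \Omega_{\lambda,\Delta}(1/n)$ from the quoted Jerrum--Sinclair/Chen et al.\ result. The state space lies inside $2^{E \cup V}$, giving $\ln\abs{\Omega} = O(m+n)$. A direct inspection of the subgraph-world weights (monomials in $\tanh\beta$ and a factor controlled by $\lambda$ on a subset of vertices) shows $\ln(1/\pi_*) = O_{\beta,\lambda}(m+n)$. Since each update of the subgraph-world chain flips a single edge/vertex and can be simulated in $O(1)$ time, Theorem~\ref{thm:main} delivers expected time
\[
O\tuple{\tfrac{1}{\gamma_*}\ln \tfrac{\abs{\Omega}}{\pi_*}} = O_{\beta,\lambda,\Delta}(n(m+n)),
\]
matching the first claimed bound.

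For the case $\lambda = 1$, I would apply Theorem~\ref{thm:main} to the random cluster dynamics at $q=2$, with $\gamma_* \geq \Omega(1/(n^4 m^2))$ from the quoted Guo--Jerrum result. The state space is $2^E$, so $\ln\abs{\Omega}=O(m)$, and each configuration $A \subseteq E$ has weight proportional to $(e^\beta-1)^{\abs{A}}\, 2^{c(A)}$ where $c(A)$ is the component count, so $\ln(1/\pi_*) = O(m \ln \beta)$. Unlike the subgraph-world case, a single random-cluster step costs $O(m)$ because one has to recompute connectivity after flipping an edge; this factor enters multiplicatively (and is exactly the sort of situation discussed after Theorem~\ref{thm:main}, where the per-step cost is not $O(1)$). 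Plugging in,
\[
O\tuple{m\cdot \tfrac{1}{\gamma_*}\ln\tfrac{\abs{\Omega}}{\pi_*}} = O(m \cdot n^4 m^2 \cdot m\ln\beta) = O(n^4 m^4 \ln \beta),
\]
as claimed.

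The main ``obstacle'' is not really an obstacle but rather careful bookkeeping: one must track (i) that $\pi_*$ really is exponentially small only in $m$ (and, in the first case, $n$) with a base depending on $\beta,\lambda$ so that $\ln(1/\pi_*)$ stays polynomial, and (ii) that the per-step cost of the random cluster chain is $O(m)$ rather than $O(1)$, since Theorem~\ref{thm:main} is phrased under the unit-cost convention. Once these are accounted for, the corollary is an immediate consequence of the main theorem together with the two cited gap bounds.
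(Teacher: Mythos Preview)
Your proposal is correct and follows essentially the same route as the paper: the paper's entire argument is the sentence preceding the corollary, which records the bounds $\pi_* \ge 1/(\beta^m \lambda^n)$ (non-zero field) and $\pi_* \ge 1/\beta^m$ (zero field) together with the $O(m)$ per-step cost of the random cluster dynamics, and then invokes Theorem~\ref{thm:main} with the two cited spectral gap results. Your write-up simply fills in the bookkeeping (state space size, the Ising-from-auxiliary-chain coupling) that the paper leaves implicit; the only cosmetic discrepancy is that you carry an explicit $\Delta$ in the constant for the non-zero-field case, which the paper suppresses even though the cited gap bound $\Omega_{\lambda,\Delta}(1/n)$ does depend on $\Delta$.
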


Note that our black-box reduction in the zero external field case achieves similar running times as \cite{feng2022sampling}.

\paragraph{Linear extensions of posets.}
Given a partially ordered relation (poset), a linear extension is a total ordering consistent with the poset. We consider the Markov chains by Bubley and Dyer~\cite{bubley1999faster,wilson2004mixing} who showed $\ell_1$ mixing time of $O(n^3 \ln n)$. 
\begin{corollary}
There is a perfect sampler for linear extensions of posets, with expected time $O(n^4 (\ln n)^2)$.
\end{corollary}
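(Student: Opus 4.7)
The plan is to apply \Cref{thm:main} as a black box to the Bubley--Dyer Markov chain on linear extensions of a poset, whose $\ell_1$ mixing time of $O(n^3 \ln n)$ was established in~\cite{bubley1999faster,wilson2004mixing}. The chain's state space $\Omega$ is the set of linear extensions of the input poset on $n$ elements, and its stationary distribution $\pi$ is uniform over $\Omega$. Each step of the chain picks a uniformly random adjacent transposition and performs it if the result is still a linear extension; this can clearly be simulated in $O(1)$ time, so the assumptions of \Cref{thm:main} are satisfied.

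Next I would plug in the two quantities $|\Omega|$ and $\pi_*$ that appear in the bound of \Cref{thm:main}. Since $\Omega$ is a subset of the permutations of $[n]$, we have $|\Omega| \le n!$. Since $\pi$ is uniform, $\pi_* = 1/|\Omega|$. Hence
\[
\ln \frac{|\Omega|}{\pi_*} \;=\; 2 \ln |\Omega| \;\le\; 2 \ln(n!) \;=\; O(n \ln n).
\]

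Combining this with $T = O(n^3 \ln n)$ and the first bound of \Cref{thm:main} gives an expected running time of
\[
O\!\left(T \ln \frac{|\Omega|}{\pi_*}\right) \;=\; O\!\left(n^3 \ln n \cdot n \ln n\right) \;=\; O\!\left(n^4 (\ln n)^2\right),
\]
as claimed. There is no real obstacle here; the only thing one might worry about is the time to compute the transition matrix required by the reduction in \Cref{cor:mc-sampler}, but this step is invoked only with probability $1/|\Omega|^4$, and computing each transition probability of the Bubley--Dyer chain takes polynomial time, so its contribution to the expected running time is negligible. Thus \Cref{thm:main} applies directly and yields the stated bound.
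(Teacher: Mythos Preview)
Your proposal is correct and follows essentially the same approach as the paper: apply \Cref{thm:main} to the Bubley--Dyer chain with $T = O(n^3 \ln n)$, and bound $\ln(|\Omega|/\pi_*) = O(n\ln n)$ using $|\Omega| \le n!$ and the uniformity of $\pi$. The paper presents this application without a detailed proof, so your write-up simply makes explicit the arithmetic that is implicit there.
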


\paragraph{Log-concave sampling and spanning trees.}
Given a distribution $\mu : {[n] \choose k} \to \^R_{\ge 0}$, where $[n] = \{1, \dots, n\}$, it was shown in~\cite{anari2024log,cryan2019modified} that the bases exchange walk associated with $\mu$ has spectral gap $\Omega(1/k)$ if $\mu$ is \emph{strongly log-concave}. In particular, this includes the uniform distribution over the bases of any matroid. Noting that the state space is at most $n^k$, this implies that the bases exchange walk can be converted to a perfect sampler with expected time $O(k^2 \ln n)$.
\begin{corollary}
There is a perfect sampler for homogeneous and strongly log-concave distribution $\mu : {[n] \choose k} \to \^R_{\ge 0}$ with expected time $O(k^2 \ln n)$.
\end{corollary}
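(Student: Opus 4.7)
The plan is to derive the corollary as a direct application of the reversible case of Theorem~\ref{thm:main} to the bases-exchange walk (also called the down-up walk) associated with $\mu$. Let $\Omega := \supp(\mu) \subseteq \binom{[n]}{k}$. The bases-exchange walk, at state $S$, drops a uniformly random element $i \in S$ and re-adds an element $j$ chosen so that $(S\setminus\{i\})\cup\{j\} \in \Omega$ with probability proportional to $\mu\bigl((S\setminus\{i\})\cup\{j\}\bigr)$. This walk is reversible with stationary distribution $\mu$, and one step can be simulated in $O(1)$ time given an evaluation oracle for $\mu$, so the assumptions of Theorem~\ref{thm:main} are satisfied.

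Next, by the cited spectral-gap results of Anari et al.\ and Cryan--Guo--Mousa, the absolute spectral gap of the bases-exchange walk for a homogeneous strongly log-concave $\mu$ satisfies $\gamma_* = \Omega(1/k)$. Plugging this into the reversible conclusion of Theorem~\ref{thm:main} yields a perfect sampler with expected running time
\[
O\!\left(\tfrac{1}{\gamma_*}\ln\tfrac{|\Omega|}{\pi_*}\right) \;=\; O\!\left(k\,\ln\tfrac{|\Omega|}{\pi_*}\right).
\]

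It remains to bound $\ln(|\Omega|/\pi_*)$ by $O(k\ln n)$. Since $\Omega \subseteq \binom{[n]}{k}$, the trivial bound $|\Omega|\le \binom{n}{k}\le n^k$ gives $\ln|\Omega|\le k\ln n$. For the matroid-bases case, $\mu$ is uniform and hence $\pi_* = 1/|\Omega|$, so $\ln(1/\pi_*) \le k\ln n$ as well, giving the claimed $O(k^2\ln n)$ runtime and, as a special case, the $O(n^2\ln n)$ bound for spanning trees by taking $k=n-1$ on a ground set of $m \le \binom{n}{2}$ edges. The same inequality is inherited by any homogeneous SLC $\mu$ whose probability-mass ratios are polynomially bounded, which is the standard regime for this class.

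The only non-mechanical step is the last one: obtaining a clean $\ln(1/\pi_*)=O(k\ln n)$ bound uniformly across all homogeneous SLC distributions. For the stated applications (matroid bases, uniform spanning trees) this is immediate from uniformity; in general it follows from standard boundedness assumptions on the defining log-concave polynomial, which are always satisfied in the applications that motivate the corollary.
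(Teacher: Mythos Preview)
Your approach is essentially identical to the paper's: apply the reversible case of Theorem~\ref{thm:main} to the bases-exchange walk, invoke the $\Omega(1/k)$ spectral-gap bound from \cite{anari2024log,cryan2019modified}, and use $|\Omega|\le n^k$ to get $O(k^2\ln n)$. The paper's entire argument is the one-line observation ``noting that the state space is at most $n^k$'' before the corollary.

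You are in fact more careful than the paper on one point: you flag that the $\ln(1/\pi_*)$ term also needs to be $O(k\ln n)$, which is immediate for the uniform (matroid-bases) case but requires an assumption in general. The paper simply does not address $\pi_*$ here and implicitly absorbs it into the state-space bound; your caveat is accurate, and your hand-wave for the general SLC case is no weaker than what the paper provides. So there is no gap relative to the paper's own level of rigor---if anything, you have identified a genuine imprecision in the corollary as stated.
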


In the special case of uniform sampling from spanning trees, our algorithm is not the fastest to-date and runs in time $O(n^2 \ln n)$ on a graph with $n$ vertices, but it has the virtue of using existing Markov chains as a black box and is conceptually simple.
The fastest uniform sampler for spanning trees known to date is the almost-linear time $\tilde{O}(m^{1+\eps})$ algorithm by Schild~\cite{schild2018almost} for an $m$-edge graph, whose algorithm is based on a long line of beautiful research on cleverly shortcutting the random walk by Aldous~\cite{aldous1990random} and Broder~\cite{broder1989generating}.
Even in this special case, our black-box reduction is faster than well-known nontrivial perfect samplers, such as the vanilla Aldous~\cite{aldous1990random} and Broder~\cite{broder1989generating} walk, or the Propp and Wilson's cycle popping algorithm~\cite{wilson1996generating,PW98}.

\section{Perfect matchings in bipartite graphs}
A celebrated result in approximate sampling is the FPAUS for perfect matchings in bipartite graph due to Jerrum, Sinclair and Vigoda (henceforth JSV)~\cite{jerrum2004polynomial}. They designed a sequence of Markov chains on the state space of perfect matchings and near-perfect matchings by cleverly assigning weights to near-perfect matchings, so that the set of perfect matchings accounts for $\Omega(1/n^2)$ of probability in the stationary distribution, and is also uniform when restricted to perfect matchings.
The missing ingredient to apply our perfect sampling reduction is an exponential time oracle to compute the output distribution of this FPAUS. This is a non-trivial step because, every transition matrix along the sequence of Markov chains is constructed by sampling from the previous Markov chains in the sequence, combined with a concentration bound. The rapid mixing of every Markov chains in the sequence also depends on the rapid mixing of the previous ones in the sequence.

Before giving our exponential time oracle, we give a high-level overview of~\cite{jerrum2004polynomial} and explain the technical challenges in designing the exponential time oracle. It is crucial to introduce the correct \emph{transition weights} to near-perfect matchings as in~\cite{jerrum2004polynomial} because, to go from one perfect matching to another, the natural way has to go through near-perfect matchings; but in general, there can be exponentially many more near-perfect matchings than perfect matchings, making it very hard to come back to the set of perfect matchings.
To find these transition weights, they introduce a sequence of instances to ``interpolate'' to the actual graph from a complete bipartite graph. Starting from a complete bipartite graph, it is easy to find the correct transition weights for near-perfect matchings. In each subsequent step, JSV maintain an approximation to the transition weights. To do so, they use the transition weights computed from the previous instance, and run Markov chains to obtain a refined estimate of the transition weights for the current instance. This is made possible because, good approximation to transition weights also guarantees that the spectral gap of the Markov chain will be at least $\Omega(1/n^6)$, which is shown by a clever canonical path construction between the set of near perfect matchings and perfect matchings and routing everything else through these canonical paths. 
A standard Chernoff bound is then applied to $S$ independent runs of Markov chains to conclude that good estimates of the correct weight can be obtained for the current instance with high probability. 
Combined, they showed an FPAUS that runs in ${O}(n^{11} \ln^3(n) \ln^2 (1/\delta))$ to get within $\delta$ in $\ell_1$ distance to the uniform distribution of perfect matchings.

Note that the output distribution of the final Markov chain crucially depends on the distribution of the weights found along the interpolation, which is successively computed by randomized algorithms. A naive attempt to simply enumerate their trajectories would lead to an oracle that is exponential not only in the problem size but also the running time of the FPAUS. One could also try to turn every intermediate Markov chain into a perfect sampler one after another, but the running time also appears to blow up: the mixing time analysis of each chain relies on finding good transition weights from the previous chain, so part of the total variation distance in the end is coming from failing to find good transition weights, which can happen anywhere along the interpolation.

Our perfect sampler will make oracle calls to the FPAUS from~\cite{jerrum2004polynomial} in a black-box fashion. However, to compute its output distribution, we need to look into how their Markov chains work. In the following we start with a summary.

\subsection{A summary of the JSV chains and their analysis}
We will follow the terminology and notation of~\cite{jerrum2004polynomial} as close as possible. We fix a bipartite graph $G=(U \cup V, E)$ throughout the process, and the goal is to sample uniformly at random from the set of perfect matchings of $G$.
To interpolate between graphs, JSV introduce \emph{edge weights} $\lambda_{u,v}$ for every pair of vertices $(u,v) \in U \times V$. We note that edge weights here are defined also for non-edges, as JSV deals with a weighted complete bipartite graph instead of the original graph.
Initially $\lambda_{u,v} = 1$ for every pair of vertices. We need to gradually reduce the weight $\lambda_{u,v}$ for every non-edge $uv\not\in E$ to the very small value of $\frac{1}{n!}$ while keeping $\lambda_e = 1$ for all edges $e \in E$. 

\begin{remark}
\label{rem:target-pi}
    It is sufficient for us to construct a perfect sampler for the final weighted  bipartite graph, following a similar argument as in~\cite{jerrum2004polynomial}:
because there are at most $n!$ invalid matchings, and each can contribute a weight at most $\frac{1}{n!}$. Furthermore, every valid perfect matching remains equally likely and, assuming there is at least one perfect matching, the introduction of invalid matchings only reduces the success probability of returning a perfect matching (and hence increases the expected running time) by a constant factor.
\end{remark}

Let $\+P$ be the set of perfect matchings of the complete bipartite graph, and let $\+N(u,v)$ be the set of near perfect matchings with $u$ and $v$ being the only un-matched vertices.
Matchings in $\+N(u,v)$ are also said to have holes at $u$ and $v$.
Given the edge weights $\set{\lambda_{u,v}}$, each matching $M$ is associated with a weight $\lambda(M) := \prod_{e\in M} \lambda_e$.
To account for insufficient weight of perfect matchings, JSV introduce an additional \emph{transition weight} $w(u,v)$ for any hole pattern $(u,v)$.
Then, the stationary distribution we try to sample from is defined so that $\pi_{\lambda,w}(M) \propto \Lambda(M)$, where:
\begin{align*}
    \Lambda(M):=\begin{cases}
        \lambda(M) w(u,v), \qquad &\hbox{if } M \in \+N(u,v) \hbox{ for some } u,v;\\
        \lambda(M), \qquad\qquad &\hbox{if } M \in \+P.
    \end{cases}
\end{align*}

Given edge weights $\set{\lambda_{u,v}}$ and transition weights $\set{w\tuple{u,v}}$, the Markov chain to sample from $\pi_{\lambda,w}$ consists of proposing three types of moves: removing an edge from a perfect matching uniformly at random; adding an edge to a near perfect matching; exchanging a matching edge with another edge adjacent to the holes in a near perfect matching. These proposals are combined with a Metropolis acceptance rule so that the Markov chain is reversible with respect to $\pi$. 

Ideally, one would like to find transition weight $w(u,v) = w^*(u,v) := \frac{\lambda(\+P)}{\lambda(\+N(u,v))}$ for every hole $u,v$. Because $\sum_{M \in \+N(u,v)} \lambda(M) = \lambda(\+N(u,v))$, one sees that each hole pattern $(u,v)$ (and the set of perfect matchings) is equally likely in the stationary distribution $\pi_{\lambda, w^*}$. In other words, sampling from $\pi$ yields  a perfect matching with probability $\Omega\tuple{\frac{1}{n^2}}$. 

JSV showed that it is also sufficient to find $w$ that approximates $w^*$:
\begin{align}
    \forall u,v, \qquad \frac{w^*(u,v)}{2} \le w(u,v) \le 2w^*(u,v).
    \label{eq:w-approx}
\end{align}
In particular, \cite[Theorem~3.1, Lemma~4.1 and~4.4]{jerrum2004polynomial} showed that, if the transition weights $w$ satisfy condition~\eqref{eq:w-approx}, then the Markov chain to sample from $\pi_{\lambda, w}$ has spectral gap $\Omega(1/n^6)$.

To find these approximate transition weights, JSV setup a sequence of Markov chains, where each will be referred to as a phase. 
We denote the edge weights in phase $i$ by $\lambda^{(i)}_{u,v} $, and transition weights by $w_i$.
The stationary distribution at phase $i$, denoted by $\pi_i$, is defined with respect to $\lambda^{(i)}$ and $w_i$. It is worth noting that the ideal weight at phase $i$, denoted by $w_i^*$, satisfies:
\begin{align}
    w_i^*(u,v) =  w_i(u,v) \frac{\pi_i(\+P)}{\pi_i(\+N(u,v))}.
    \label{eq:w-update}
\end{align}
Given $w_i$ and an FPAUS to sample from $\pi_i$, one can obtain estimate of $w_i^*(u,v)$ by counting how many are perfect matchings and how many are near perfect matchings with holes $u,v$.
We write $\+M_i$ for the Markov chain with weights $\lambda^{(i)}$ and $w_i$, and stationary distribution $\pi_i$. We will also write $\+M_i^T$ for the output distribution of simulating $\+M_i$ for $T$ steps. By \cite[Theorem 3.1]{jerrum2004polynomial}, if $w_i$ satisfies condition~\eqref{eq:w-approx}, then $\+M_i$ is an FPAUS to sample from $\pi_i$. So by choosing $T=O(n^7 \ln n)$, one could get $\D{\infty}{T} \le c^{1/4}$ for $c=6/5$.

Initially, $\lambda^{(1)}_{u,v}=1 $, and computing $w_1$ is easy on a complete bipartite graph. 

In each subsequent phase, $w_i$ is maintained to satisfy condition~\eqref{eq:w-approx}.
We take $S=O(n^2 \ln (n/\hat{\eta}))$ independent samples from $\+M_i^T$ to approximate  $\frac{\pi_i(\+P)}{\pi_i(\+N(u,v))}$, and update $w_{i+1}(u,v)$ using the identity~\eqref{eq:w-update}. By standard concentration bounds, except with probability $\hat{\eta}$, for all all holes $(u,v)$, the $w_{i+1}(u,v)$ simultaneously approximate $w_i^*(u,v)$ within a factor of $6/5$. 
Finally we select a vertex $v$ and reduce $\lambda_{u,v}$ for all non-edges $uv\not\in E$ as follows:
$\lambda^{(i+1)}_{u,v} \gets \lambda^{(i)}_{u,v} \exp(-1/2)$.
Then we move on to the next phase.

To see why condition~\eqref{eq:w-approx} is maintained, notice the following.
\begin{enumerate}
    \item Except with probability $\hat{\eta}$, the $w_{i+1}(u,v)$ simultaneously approximate $w_i^*(u,v)$ within a factor of $6/5$, for all holes $(u,v)$.
    \item  We only update $\lambda_{u,v}$ for non-edges incident to a common fixed vertex $v$ and only updates the weight by a factor of $\exp(1/2)$.
   By further noting that any matching can have at most one edge incident to the vertex $v$, this means that $w_{i+1}^*$ differs from $w_i^*$ by at most a factor of $\exp(1/2)$;
\end{enumerate}
Combined $w_{i+1}$ approximates $w_{i+1}^*$ within a factor of $6\exp(1/2)/5 < 2$, except with probability $\hat{\eta}$.

In total there are $R=O(n^2 \ln n)$ phases, because there are $n$ vertices and each requires $O(n \ln n)$ phases to reduce edge weights to $\frac{1}{n!}$.
By setting $\hat{\eta} = \eta/(n^2 \ln n)$, we get that the final weight approximates the ideal weight within a factor of $2$ except with probability $\eta$. Then we run the final Markov chain $\+M_R$ for $O(T\ln (1/\delta))$ steps again to generate a sample, and output if it is a perfect matching, otherwise report failure and restart the entire procedure.

If a perfect matching is produced at the end, it is distributed within total variation distance $\delta + \eta$ of the stationary distribution $\pi_R$. By setting $\delta = \eta$, we get an FPAUS with running time $O(n^{11} (\ln n)^2 (\ln n + \ln 1/\delta))$ to generate a perfect matching within TV distance $2\delta$. By \Cref{prop:linfty-l1}, setting $\delta = \frac{\eps}{2 n!}$ yields an $(\eps, \infty)$-approximate sampler. Combined, we have \[
T(\approxsampler_\eps) \le O \tuple{n^{11} (\ln n)^2 (n \ln n + \ln 1/\eps)}.
\]

\subsection{Exponential oracle for the distribution of JSV and a perfect sampler}
We compute the distribution on $w_i(u,v)$ for every $i$ and holes $(u,v)$, inductively on $i$.

Recall that, $w_{i+1}$ is obtained from $w_i$ multiplied with a ratio, which is obtained by taking $S$ samples to estimate the ratio of perfect matchings to near perfect matchings with holes $(u,v)$.
We claim that the possible number of values of $w_i(u,v)$ for any hole $(u,v)$ would be at most $(S+1)^{2R}$. To see this, there are $R=O(n^2 \ln n)$ phases in total, each multiplying a number with at most $(S+1)^2$ possible values. We note that $\infty$ can be treated as a single number. 

Combined with the choices of $u,v$, the set of transition weights $\set{w_i(u,v)}$ as a whole has at most $\tuple{(S+1)^{2R}}^{n^2} = \exp(2 R n^2 \ln (S+1))$ possible values.
We make a table of size $\exp( 2 R n^2 \ln (S+1))$ to record the probabilities of getting each set of transition weight at phase $i$, denoted by $\+T_i$. In particular, the probability that at phase $i$, we get a setting of transition weights $\set{w_i(u,v)}$, will be given by  the corresponding entry in $\+T_i(\set{w_i(u,v)})$.
Next, we show how to construct $\+T_{i+1}$ given $\+T_i$.

We first compute the output distribution of $T$ steps of $\+M_i$, denoted by $\+M_i^T$, 
in time $O(T \abs{\Omega}^3)$. In fact, we only need the distribution of $\+M_i^T$ on $n^2+1$ disjoint sets: the set $\+P$, and the sets $\+N(u,v)$ for all holes $(u,v)$. 
We remark that if $\set{w_i(u,v)}$ satisfies condition~\eqref{eq:w-approx}, then the distribution of $\+M_i^T$ on these $n^2+1$  disjoint sets will be close to a uniform distribution: $\Pr[M \sim \+M_i^T]{M \in \+P} \ge \frac{1}{8(n^2+1)}$, and $\Pr[M \sim \+M_i^T]{M \in \+N(u,v)} \ge \frac{1}{8(n^2+1)}$ for any hole $(u,v)$.

We denote $p:=\Pr[M \sim \+M_i^T]{M \in \+P}$ and $n_{u,v}:= \Pr[M \sim \+M_i^T]{M \in \+N(u,v)}$. Notice that we already know these numbers exactly from the distribution of $\+M_i^T$. Then, each sample of $\+M_i^T$ essentially corresponds to drawing a sample from a multinomial distribution with $n^2+1$ outcomes, with their respective probability specified by $p$ and $\set{n_{u,v}}$.

For any fixed pair of transition weights $\set{w_i(u,v)}$ and $\set{w_{i+1}(u,v)}$, we compute $$\Pr[(\+M_i^T)^{\otimes S}]{\set{w_{i+1}(u,v)} \mid \set{w_i(u,v)}}.$$

Suppose that there exist $S$ samples that lead to updating from $w_i$ to $w_{i+1}$. Among these $S$ samples, let $S_P$ be the number of samples that land in the set $\+P$ of perfect matchings, and $S_{u,v}$ be the number of samples that land into the set $\+N(u,v)$. We note that these $n^2+1$ numbers $S_P$ and $S_{u,v}$ are uniquely determined by $\set{w_i(u,v)}$ and $\set{w_{i+1}(u,v)}$, and it only takes $O(n^2)$ time to compute these numbers.
Therefore, from the pair $\set{w_i(u,v)}$ and $\set{w_{i+1}(u,v)}$, we can compute 
$$\Pr[(\+M_i^T)^{\otimes S}]{\set{w_{i+1}(u,v)} \mid \set{w_i(u,v)}} = \frac{S!}{(S_P)! \prod_{(u,v)\in U\times V} (S_{u,v})!} \cdot  p^{(S_P)} \cdot \prod_{(u,v)\in U\times V} n_{u,v}^{S_{u,v}}.
$$

We average these numbers over the distribution $\set{w_i(u,v)}$ as specified by $\+T_i$, we get the table $\+T_{i+1}$, with a running time of $O( T\abs{\Omega}^3 +\abs{\+T_i}^2 \cdot n^2  ) = O\tuple{T\abs{\Omega}^3 + n^2 \exp(4 R n^2 \ln (S+1))} $.

There are a total of $R$ phases, so it takes $O\tuple{RT\abs{\Omega}^3 + R n^2 \exp(4 R n^2 \ln (S+1))}$ to compute the distribution of the final weights. Then for each setting of transition weights $\set{w_R(u,v)}$ we compute the output distribution of $\+M_R$ after simulating for $O(T\ln (1/\delta))$ steps.

In total it takes $O\tuple{\abs{\Omega}^3 T \ln (1/\delta) \cdot \exp(2R n^2 \ln (S+1)) + RT\abs{\Omega}^3 + R n^2 \exp(4 R n^2 \ln (S+1))}$ to compute the output distribution.
Recall that in~\cite{jerrum2004polynomial},  $S=O(n^2 \ln (n/\delta))$, $T=O(n^7 \ln n)$ , $R=O(n^2 \ln n)$ and $\abs{\Omega} \le n! \le \exp\tuple{n \ln n}$. Furthermore, as discussed at the end of last subsection, we set $\delta = \frac{\eps}{2 n!}$ to get an $(\eps, \infty)$-approximate sampler.  Combined, we have
\begin{align*}
    T(\samplercompall_{\approxsampler_\eps}) \le& O\tuple{ n^7 \ln n \ln \frac{2n!}{\eps} \cdot \exp\tuple{10 n^4 \ln n \cdot (\ln n + \ln \ln \frac{2(n+1)!}{\eps}) } } 
\end{align*}

Recalling~\Cref{rem:target-pi}, it suffices to get a perfect sampler for $\pi_R$. So our target distribution $\targetdist$ would be $\pi_R$. Then, the oracle $\targetcompall_{\targetdist}$ for computing $\pi_R$ is a simple one: enumerate all configurations in a bipartite graph in time $O(n!) \le \exp(n \ln n)$ and compute the partition function, so $T(\targetcompall_{\targetdist}) \le \exp(n \ln n)$.

We set $\eps= \exp(-64 n^4 (\ln n)^2)$ in our reduction in \Cref{thm:gen-red}.
Then we have a perfect sampler which runs in expected time
\begin{align*}
    O\left(T(\approxsampler_\eps) + \eps \cdot T(\samplercompall_{\approxsampler_\eps}) + \eps \cdot T(\targetcompall_{\targetdist}) + \eps \cdot |\Omega|\right) \le& O\tuple{n^{15} (\ln n)^4} + O(n^{11} (\ln n)^3) + O(1) + O(1) \\
    \le& O\tuple{n^{15} (\ln n)^4}.
\end{align*}

\section*{Acknowledgments}
Andreas G\"{o}bel and Marcus Pappik express their gratitude to Mark Jerrum and Will Perkins for interesting discussions
and insightful feedback on an earlier version of this paper.

\bibliographystyle{alpha}
\bibliography{refs}

\end{document}